\newcommand{\ra}[1]{\renewcommand{\arraystretch}{#1}}
\newtheorem{theorem}{Theorem}
\newtheorem{corollary}{Corollary}
\newtheorem{definition}{Definition}
\newtheorem{example}{Example}
\newtheorem{lemma}{Lemma}
\newtheorem{proposition}{Proposition}
\newtheorem{remark}{Remark}
\newenvironment{proof}[1][Proof]{\emph{#1.} }{\  \hfill $\square $ \vspace{5 pt}}
\tikzset{myptr/.style={decoration={markings,mark=at position 1 with %
       {\arrow[scale=2,>=stealth]{>}}},postaction={decorate}}}
\newcommand*\samethanks[1][\value{footnote}]{\footnotemark[#1]}
\DeclareFontFamily{T1}{calligra}{}
\DeclareFontShape{T1}{calligra}{m}{n}{<->s*[1.44]callig15}{}
\DeclareMathAlphabet\mathcalligra   {T1}{calligra} {m} {n}
\begin{document}

\title{A characterization of absorbing sets in coalition formation games\thanks{We thank the Advisory Editor and two anonymous reviewers for their valuable comments. We also thank Elena Molis for her participation in the early stages of the paper, and  Coralio Ballester, Jordi Massó, Alejandro Neme, Oscar Volij,  and Peio Zuazo-Garin for their suggestions and comments. Inarra acknowledges financial support from the Spanish Ministry of Economy and Competitiveness (project PID2019-107539GB-I00), and from the Basque Government (project IT1367-19); and the hospitality of the Instituto de Matemática Aplicada San Luis, Argentina. Bonifacio and  Neme acknowledge financial support
from the  UNSL through grants 032016 and  030320,  from the Consejo Nacional
de Investigaciones Cient\'{\i}ficas y T\'{e}cnicas (CONICET) through grant
PIP 112-200801-00655, and from Agencia Nacional de Promoción Cient\'ifica y Tecnológica through grant PICT 2017-2355; and the hospitality of  
the University of the Basque Country (UPV/EHU), Spain, during their respective visits.}}

\author{A. G. Bonifacio\thanks{
Instituto de Matem\'{a}tica Aplicada San Luis (UNSL-CONICET), Departamento de Matemática (Universidad Nacional de San Luis),  Argentina. Emails: \texttt{abonifacio@unsl.edu.ar} 
and \texttt{paneme@unsl.edu.ar} 
} \and E. Inarra\thanks{
Department of Economic Analysis  and Public Economic Institute, University of the
Basque Country (UPV/EHU), 48015 Bilbao, Spain. Email: \texttt{elena.inarra@ehu.eus}}
\and P. Neme\samethanks[2]}

\maketitle

\abstract{
Given a standard myopic dynamic process among coalition structures, an absorbing set is a minimal collection of such structures that is never left once entered through that process. Absorbing sets are an important solution concept in coalition formation games, but they have drawbacks: they can be large and hard to obtain. In this paper, we characterize an absorbing set in terms of a collection consisting of a small number of sets of coalitions that we refer to as a ``reduced form'' of a game. We apply our characterization to study convergence to stability in several economic environments.
\bigskip

\noindent \emph{JEL classification:} C71, C78.


\noindent \emph{Keywords:} Coalition formation,  absorbing
set, reduced form of a game, convergence to stability.

}%

\maketitle
\section{Introduction}

Coalition formation games have become a central focus in a substantial body of literature addressing diverse social and economic issues. This includes topics such as the establishment of cartels, lobbying, customs unions, conflicts, the provision of public goods, and the formation of political parties  \citep[see][]{ray2007game,ray2015coalition}. Coalition formation games encompass matching models that have significantly improved the understanding and design of mechanisms for various applications, such as school choice and kidney exchange  \citep[see][and references therein]{roth2018marketplaces}, one-sided problems like the roommate problem, as well as two-sided problems ranging from the classical marriage problem to many-to-one matching problems with peer effects and externalities.  

A widely studied solution concept for coalition formation games is that of
 a stable coalition structure.\footnote{In the literature on coalition formation games the papers by \cite{banerjee2001core,bogomolnaia2002stability,iehle2007core} identify structures of preferences that guarantee the existence of stable coalition structures.  \cite{echenique2007solution} develop an algorithm for matching markets with preferences over colleagues to determine the existence of stable matchings. \cite{pycia2012stability} and \cite{gallo2018rationing}, in different contexts, study what sharing rules induce stable coalition formation games. Two other notable contributions to studying sharing rules and stable coalition structures are \cite{barbera2015meritocracy} and \cite{herings2021last}.}  A coalition structure is a partition of the set of agents into coalitions and a stable coalition structure is a coalition structure that  "protects" its coalitions in the following way: Whenever some agents have an incentive to deviate to an external coalition (outside the coalition structure), another coalition in the structure impedes its formation because there is (at least) one agent that prefers that coalition to the external coalition.  Note, however, that a coalition formation game may have no stable coalition structure.

In coalition formation games, there is a more general solution concept called absorbing set.\footnote{This notion has been studied by several authors under different names and in different contexts. As far as we know, \cite{schwartz1970possibility} was the first to introduce it for collective decision-making problems. See also  \cite{inarra2013absorbing,jackson2002evolution,olaizola2014asymmetric}. The union of absorbing sets gives the ``admissible set'' \citep{kalai1977admissible}. Recently, \cite{demuynck2019myopic} define the ``myopic stable set'' in a very general class of social environments and study its relation to other solution concepts. } An important feature of these sets is that they are known to exist for every coalition formation game. Consider the dynamic process where an unstable coalition structure undergoes adjustments when certain agents collectively decide to deviate and form a new coalition, with abandoned agents becoming singletons in the new structure.\footnote{This process corresponds to the $\gamma$-model of \cite{hart1983endogenous}. These authors argue that when a coalition is formed through the agreement of all its members, and subsequently some agents withdraw, the agreement dissolves, and the remaining agents become singletons. In our analysis, this assumption fits well, as our modeling may not encompass all possible coalitions, and in such instances, the agents who are abandoned may not be able to remain as a ``permissible'' coalition.} An absorbing set is a minimal collection of coalition structures that, once entered throughout this dynamic process, is never left. Furthermore, a stable coalition structure generates an absorbing set consisting of only that coalition structure. 
We call ``trivial'' to these absorbing sets. ``Non-trivial'' absorbing sets, in counter-position, have several coalition structures, none of which are stable. 

In the case of roommate problems, coalition formation is well-understood. A roommate problem has either trivial absorbing sets or non-trivial absorbing sets \citep{inarra2013absorbing}. The existence of non-trivial absorbing sets in these problems is equivalent to the existence of a set consisting of an odd number of coalitions presenting cyclical behavior, called an ``odd ring'' \citep{tan1991necessary}. Furthermore, in the trivial absorbing set case,  starting from an arbitrary coalition structure a path to a stable coalition structure is guaranteed \citep{diamantoudi2004random}. However, none of these results hold in more general settings where trivial and non-trivial absorbing sets can coexist.

Absorbing sets in general settings have other drawbacks. On one hand, in order to compute an absorbing set, every coalition structure must be computed. On the other hand, in order to define them, a dynamic process among coalition structures has to be specified. 
To avoid such drawbacks, our goal is to characterize an absorbing set in terms of a collection consisting of a small number of sets of coalitions, which we call a ``reduced form''. This reduced form condenses all the relevant information on what coalition structures will form in the absorbing set.

To illustrate the complexity of an absorbing set consider the game presented in Table \ref{table para la reduced form}.\footnote{In order to ease notation, throughout the paper we denote coalitions without curly brackets and commas, i.e., coalition $\{6,7,8\}$ is simply written $678$.}

\begin{table}[h!]{\small
    \centering
         \begin{tabular}{@{}ccccccccccccccccccc@{}}\hline
        $\boldsymbol{1}$ && $\boldsymbol{2}$ && $\boldsymbol{3}$ && $\boldsymbol{4}$ && $\boldsymbol{5}$ && $\boldsymbol{6}$ && $\boldsymbol{7}$ && $\boldsymbol{8}$ \\ 
        \hline
        $12$ && $23$ && $34$ && $46$ && $45$ && $678$ && $78$ && $78$\\ 
        $13$ && $12$ && $13$ && $45$  && $56$ && $56$ && $678$ && $678$\\ 
        $1$&& $2$ && $23$ &&$34$ &&$5$ &&$46$ && $7$ && $8$  \\ 
          &&  && $3$ &&$4$ &&   && $6$ && &&    \\ \hline
         \end{tabular} 
         \caption{\small An 8-agent game.}
          \label{table para la reduced form}}
\end{table}

\noindent Let us take a look at the preferences in Table \ref{table para la reduced form}. Agent $1$, for instance, prefers coalition $12$ to coalition $13$, and no other coalition (to which agent $1$ belongs) is ``permissible'' for this agent. It can be checked that this game has only one absorbing set, depicted in Figure \ref{figura intro}. This absorbing set is non-trivial and has fourteen coalition structures. Note that coalition $78$ is present in every coalition structure of the absorbing set.

 \begin{figure}[h!]
 \centering
\begin{tikzpicture}[scale=1]

\node (1) at (6,12)  {\scriptsize{$\{1,2,3,45,6,78\}$}};
\node (2) at (9,11)  {\scriptsize{$\{1,2,3,46,5,78\}$}};
\node (3) at (11,9)  {\scriptsize{$\{1,2,3,4,56,78\}$}};
\node (4) at (12,7)  {\scriptsize{$\{12,3,45,6,78\}$}};
\node (5) at (12,5)  {\scriptsize{$\{1,23,45,6,78\}$}};
\node (6) at (11,3) {\scriptsize{$\{13,2,45,6,78\}$}};
\node (7) at (9,1)  {\scriptsize{$\{12,3,46,5,78\}$}};
\node (8) at (6,0)  {\scriptsize{$\{1,23,46,5,78\}$}};
\node (9) at (3,1)  {\scriptsize{$\{13,2,46,5,78\}$}};
\node (10) at (1,3)  {\scriptsize{$\{12,3,4,56,78\}$}};
\node (11) at (0,5)  {\scriptsize{$\{1,23,4,56,78\}$}};
\node (12) at (0,7) {\scriptsize{$\{13,2,4,56,78\}$}};
\node (13) at (1,9)  {\scriptsize{$\{1,2,34,56,78\}$}};
\node (14) at (3,11)  {\scriptsize{$\{12,34,56,78\}$}};

\draw[blue,-{Latex[length=1.3mm]}] (3) to (1);
\draw[blue,-{Latex[length=1.3mm]}] (13) to (1);
\draw[blue,-{Latex[length=1.3mm]}] (1) to (2);
\draw[blue,-{Latex[length=1.3mm]}] (2) to (3);
\draw[blue,-{Latex[length=1.3mm]}] (1) to (4);
\draw[blue,-{Latex[length=1.3mm]}] (6) to [bend left](4);
\draw[blue,-{Latex[length=1.3mm]}] (10) to (4);
\draw[blue,-{Latex[length=1.3mm]}] (14) to (4);
\draw[blue,-{Latex[length=1.3mm]}] (1) to (5);
\draw[blue,-{Latex[length=1.3mm]}] (11) to (5);
\draw[blue,-{Latex[length=1.3mm]}] (4) to (5);
\draw[blue,-{Latex[length=1.3mm]}] (1) to (6);
\draw[blue,-{Latex[length=1.3mm]}] (12) to (6);
\draw[blue,-{Latex[length=1.3mm]}] (5) to (6);
\draw[blue,-{Latex[length=1.3mm]}] (9) to (7);
\draw[blue,-{Latex[length=1.3mm]}] (4) to [bend right] (7);
\draw[blue,-{Latex[length=1.3mm]}] (2) to [bend right] (7);
\draw[blue,-{Latex[length=1.3mm]}] (7) to (8);
\draw[blue,-{Latex[length=1.3mm]}] (2) to [bend right] (8);
\draw[blue,-{Latex[length=1.3mm]}] (5) to (8);
\draw[blue,-{Latex[length=1.3mm]}] (8) to (9);
\draw[blue,-{Latex[length=1.3mm]}] (6) to (9);
\draw[blue,-{Latex[length=1.3mm]}] (2) to [bend right]  (9);
\draw[blue,-{Latex[length=1.3mm]}] (3) to (10);
\draw[blue,-{Latex[length=1.3mm]}] (12) to [bend left] (10);
\draw[blue,-{Latex[length=1.3mm]}] (7) to (10);
\draw[blue,-{Latex[length=1.3mm]}] (3) to (11);
\draw[blue,-{Latex[length=1.3mm]}] (10) to (11);
\draw[blue,-{Latex[length=1.3mm]}] (8) to (11);
\draw[blue,-{Latex[length=1.3mm]}] (3) to (12);
\draw[blue,-{Latex[length=1.3mm]}] (11) to (12);
\draw[blue,-{Latex[length=1.3mm]}] (9) to  (12);
\draw[blue,-{Latex[length=1.3mm]}] (3) to (13);
\draw[blue,-{Latex[length=1.3mm]}] (12) to (13);
\draw[blue,-{Latex[length=1.3mm]}] (11) to [bend right] (13);
\draw[blue,-{Latex[length=1.3mm]}] (10) to [bend right] (14);
\draw[blue,-{Latex[length=1.3mm]}] (13) to (14);
\end{tikzpicture}
\caption{\small Non-trivial absorbing set of game.}
\label{figura intro}
\end{figure}

In Table \ref{table para la reduced form},  we can also observe that the set of coalitions $\{13,12,23\}$ presents ``cyclical behavior'': agent $1$ prefers coalition $12$ to coalition $13$, agent $3$ prefers coalition $13$ to coalition $23$, and agent $2$ prefers coalition $23$ to coalition $12$. Observe that this cyclical behavior among coalitions induces a cyclical behavior among coalition structures. For instance, in Figure \ref{figura intro}, $\{12,3,46,5,78\}$ dominates $\{13,2,46,5,78\}$, $\{13,2,46,5,78\}$ dominates $\{1,23,46,5,78\}$, and in turn $\{1,23,46,5,78\}$ dominates $\{12,3,46,5,78\}$. The same occurs with the set of coalitions $\{45,46,56\}$ and, for instance, with coalition structures $\{12,3,45,6,78\}$, $\{12,3,46,5,78\}$, and $\{12,3,4,56,78\}$. Notice that, although both sets of coalitions $\{13,12,23\}$ and $\{45,46,56\}$ present cyclical behavior, their influence on the absorbing set is different since:
\begin{enumerate}[(i)]
    \item in each coalition structure of the absorbing set, there is one coalition of the set $\{45,46,56\}$, and
    \item some coalition structures have no coalition of the set $\{13,12,23\}$ present.
   
\end{enumerate}
The only deviations from coalitions in the set $\{45,46,56\}$ that agents can perform within the absorbing set involve only coalitions in that same set. To see this, notice that the only agent having incentive to deviate from any coalition structure of the absorbing set to an external coalition is agent $6$ (to coalition $678$). As coalition $78$ is formed in each coalition structure of the absorbing set, and agent  $7$ prefers coalition $78$ to coalition $678$, such deviation will never be performed. So the cyclical behavior of the set of coalitions $\{45,46,56\}$ is never disrupted. However, this is not true for the set  $\{13,12,23\}$. For instance, agent $3$  has the incentive to deviate from coalition structure $\{13,2,4,56,78\}$ to the external coalition $34$, and there is no way to impede such deviation. When this happens, the cyclical behavior of the set of coalitions $\{13,12,23\}$ is disrupted.

Now, consider the collection formed by the following three sets:  $\{45,46,56\}$, $\{78\}$, and $\{1,2,3\}$. We claim that this collection condenses all the relevant information to reconstruct the absorbing set. As just analyzed, sets $\{45,46,56\}$, and $\{78\}$ ``protect'' each other from external deviations, implying (i) and the fact that coalition $78$ is present in every coalition structure of the absorbing set.  Furthermore, some coalition structures in the absorbing set have all agents of the third set as singletons, implying (ii).
Collections of this type, which we call ``reduced forms'', are our proposal to characterize absorbing sets. They are composed by
\begin{itemize}
    \item sets consisting of one non-singleton coalition that we call ``fixed components'', 
    \item sets consisting of coalitions presenting cyclical behavior that we call ``generalized rings'', and
    \item a set consisting of singleton coalitions.
\end{itemize}
Fixed components and generalized rings in a reduced form protect each other in a stable way, whereas any fixed component or generalized ring formed by agents of the set of singletons is \emph{not} protected by this collection. 
Our main finding is that each absorbing set generates a reduced form and, conversely, each reduced form generates an absorbing set. Thus, by knowing the reduced form, any coalition structure of the absorbing set can be identified.\footnote{A typical coalition structure of the absorbing set of the previous example has the coalition of the fixed component $\{78\}$, a coalition of the generalized ring $\{45,46,56\}$, and permissible coalitions formed by the rest of the agents. For instance, \{1,2,3,45,6,78\} or \{12,34,56,78\}.} These coalition structures include the coalition of each fixed component, some coalitions of the generalized rings, and some permissible coalitions that agents of the set of singletons belong to.

A reduced form with no generalized rings can be identified with a trivial absorbing set. Therefore, as a by-product of our main result,  we find that a coalition formation game having a reduced form with no generalized ring has at least one stable coalition structure.

The essence of generalized ring and reduced form notions have already been identified in roommate problems. On the one hand,  a generalized ring adapts the notion of an odd ring to general coalition formation games. On the other hand, the idea of reduced form has been studied in roommate problems under the name of ``maximal stable partitions'' \citep[][]{inarra2013absorbing}. An important difference between roommate problems and general coalition formation games is that the existence of a generalized ring itself does not imply either the non-existence of trivial absorbing sets or the existence of a reduced form to which that generalized ring belongs.\footnote{Notice that, in the game of Table \ref{table para la reduced form}, generalized ring $\{13,12,23\}$ does not belong to the unique reduced form.}

 A pertinent question for coalition formation games with a stable coalition structure is whether agents always reach stability when left to interact on their own, or whether there is a need to introduce an arbitrator so that agents can achieve stability. We say that a game exhibits convergence to stability if starting from any coalition structure, there is a path towards a stable coalition structure throughout the domination relation.
  The reduced form of a game provides us with a tool for presenting a necessary and sufficient condition for general coalition formation games to guarantee that agents achieve stability on their own:
If no reduced form of the game has generalized rings, convergence to stability is guaranteed.

As an application of this result on convergence to stability, we show that games satisfying the ``common ranking property'' \citep{farrell1988partnerships} exhibit convergence to stability, but games satisfying the ``weak top coalition property'' \citep{banerjee2001core} and the ``ordinally balance property'' \citep{bogomolnaia2002stability} may lack convergence to stability.

We also analyze some economic environments in which coalitions produce an output to be divided among their members according to a pre-specified sharing rule. In such environments, the sharing rule naturally induces a game where each agent ranks the coalitions to which she belongs according to the payoffs that she gets. We focus on two types of sharing rules: Bargaining rules and rationing rules. We show that games induced by ``pairwise aligned'' bargaining rules \citep{pycia2012stability}, which include the Nash bargaining rule \citep{nash1950}, exhibit convergence to stability.  A similar result is obtained in the context of rationing for  ``parametric'' rules \citep[see][]{young1987dividing,stovall2014collective}, which include several of the most thoroughly-studied rules in the rationing literature. 

The rest of the paper is organized as follows. Section \ref{section preliminares} presents the
preliminaries. Section \ref{section reduced form} is divided into four subsections. Subsection \ref{subsection generalized ring} presents the definition of a generalized ring which is then used in Subsection \ref{subsection reduced form} to define a reduced form. Subsection \ref{subsection characterization} characterizes absorbing sets in terms of reduced forms. In Subsection \ref{aplicacion al roommate} we compare reduced forms to ``maximal stable partitions'' in roommate problems. Section \ref{aplication}  is devoted to applying our characterization result to the study of convergence to stability in several economic environments. Section \ref{Discusion} contains some final remarks. Appendix \ref{section ring en preferences} discusses the relation between two well-known concepts in the literature (rings in preferences and cycles of coalition structures) which is essential for the proof of the characterization result, presented in Appendix \ref{apendice}.

\section{Preliminaries}\label{section preliminares}

Let $N=\{1,\ldots,n\}$ be a finite set of \emph{agents}. A non-empty subset $C$ of $N$
is called a \emph{coalition}. 
Each agent $i\in N$ has a strict, transitive
\emph{preference relation} on the set of coalitions to which she belongs, denoted
by $\succ_{i}.$  Given coalitions $C$ and $C'$, when agent $i\in C\cap C^{\prime}$ prefers coalition $C$ to $C^{\prime}$ we write $C\succ_{i}C'$.  We say that $C$ \emph{is (unanimously) preferred to} $C'$, and write $C\succ C'$, if $C\succ_i C'$ for each $i\in C' \cap C$.

 A set of agents $N$ and a  preference profile for such agents $\succ_{N}=(\succ_{i})_{i\in N}$ define a
\emph{coalition formation game } which is denoted by $(N,\succ_N)$.  Let $\mathcal{K}=\{C\in 2^N :|C|\geq1 \text{ and } C\succeq_i\{i\}$ for each $i\in N\}$ be the set of \textit{permissible} coalitions of game $(N,\succ_N)$.\footnote{Here, $2^N$  denotes the collection of non-empty subsets of $N$.}  






 Let $\Pi$ denote the set of partitions of $N$ formed by permissible coalitions,  which we call \emph{coalition structures.}  A generic element of $\Pi
$ is denoted by $\pi.$ For each $\pi \in
\Pi$ and each $i \in N,$ $\pi(i)$ denotes the coalition in $\pi$ that
contains agent $i.$ 
Furthermore, given coalition structure $\pi$ and a coalition $C\in \mathcal{K}\setminus \pi$, we say that $C$ \emph{blocks} $\pi$ if $C\succ\pi(i)$ for each $i\in C$. 

The main solution concept for a coalition formation game is that of stability, namely, a coalition structure that is immune  to the deviation of
coalitions.
A coalition structure $\pi \in\Pi$ is \emph{stable} if  the existence of $C\in \mathcal{K}$ and $i\in C$ such that $C\succ_i\pi(i)$ implies the existence of $j\in C$ such that  $\pi(j)\succ_j C$. In other words, a stable coalition structure ``protects'' its coalitions from external deviations.  Hereafter, a stable coalition structure is denoted by $\pi^N.$ 

Another solution concept for a coalition formation game is that of an ``absorbing set'', which can be described as follows:
Given a dynamic process defined between coalition structures by means of a domination relation, an absorbing set is a minimal collection of coalition structures that, once entered throughout this dynamic process, is never left. 
In this paper, we adopt the
standard (myopic) dynamic process 
in which a coalition structure is replaced by a new one where a coalition of better-off agents is formed, agents that are abandoned appear as singletons,
 and all other coalitions remain unchanged.\footnote{The dynamic process just described is inspired by the $\gamma$-model by \cite{hart1983endogenous}. Other dynamic processes can be considered.  For instance, the $\delta$-model in \cite{hart1983endogenous}.
For roommate problems, \cite{tamura1993transformation} considers a similar process but abandoned agents instead of remaining single join together in a new coalition.}

Let $(N, \succ_N)$ be a coalition formation game. The \textit{domination relation} $\boldsymbol{\gg}$ over $\Pi$ is defined as follows: $\pi' \gg \pi$ if and only if there is $C \in  \mathcal{K}$ such that 
\begin{enumerate}[(i)]
\item $C \in \pi ^{\prime }$ and $C\succ \pi(i)$ for each $i\in C$,

\item for each $C'\in \pi$ such that $C' \cap C \neq \emptyset$, $%
\pi^{\prime}(j)=\{j\}$ for each $j\in C' \setminus C,$

\item for each $C' \in \pi$ such that $C' \cap C=\emptyset $, $C' \in \pi
^{\prime }$.
\end{enumerate}

Notice that $\gg$ is a binary relation that is irreflexive, antisymmetric, and not necessarily transitive.
To stress the role of coalition $C,$  $\pi'$ is said to dominate $\pi$ via $C$, and  \textit{$\pi' \gg \pi$ via $C$} is written.
Condition (i) says that each agent $i$ of coalition $C$ improves in $\pi'$ with
respect to her position in $\pi$. Condition (ii) says that the agents in coalitions from
which one or more of them depart (to form $C$) become singleton sets in $\pi'$.
Condition (iii) says that coalitions that  suffer no departures in 
$\pi $ remain unchanged in $\pi ^{\prime }$.

Given the domination relation $\gg$  between coalition structures, let $\gg^{T}$ be the \textit{transitive closure} of $\gg$. That is,  given any two coalition structures $\pi$ and $\pi'$,  $\pi^{\prime
}\gg^{T}\pi$ if and only if there is a finite sequence of coalition structures
$\pi=\pi^{0},\pi^{1},\ldots,\pi^{J}=\pi^{\prime}$ such that, for all
$j \in\{1,...,J\}$, $\pi^{j}\gg \pi^{j-1}$. Now, we are in a position to formalize the notion of absorbing set.\medskip

\begin{definition}\label{absorbing}
A non-empty set of coalition structures $\mathcal{A}\subseteq \Pi$ is an \textbf{absorbing set} whenever for each $\pi \in \mathcal{A}$ and each $\pi' \in \Pi \setminus \{\pi\},$ $$\pi' \gg^T \pi\text{ if and only if }\pi' \in \mathcal{A}.$$

\noindent If $|\mathcal{A}| \geq 3$,   $\mathcal{A}$ is said to be a \textbf{non-trivial absorbing set}. Otherwise, the absorbing set is \textbf{trivial}. 

\end{definition}\medskip

\noindent Notice that coalition structures in $\mathcal{A} $ are
symmetrically connected by the relation $\gg^{T}$, and that no coalition structure in $\mathcal{A}$ is dominated by a coalition structure outside the set. \medskip

\begin{remark}\label{remarkabsorbing}\ 
Facts on absorbing sets.
\begin{enumerate}[(i)] 
\item There is always an absorbing set.
\item An absorbing set $\mathcal{A}$ contains no stable coalition structure if and only if  $|\mathcal{A}| \geq3.$ 
\item $\pi^{N}$ is a stable coalition structure if and only if  $\{ \pi^{N}\}$ is a trivial absorbing set.
\item For  each non-stable coalition structure $\pi \in \Pi$, there are an absorbing set $\mathcal{A}$ and a coalition structure $\pi'\in \mathcal{A}$ such that $\pi' \gg^T \pi$. 
\end{enumerate}
\end{remark}\medskip

\noindent  Remark \ref{remarkabsorbing} (i) is implied by the finiteness of the set of coalition structures.
Remark \ref{remarkabsorbing} (ii)  is implied by the antisymmetry of $\gg$. Remark \ref{remarkabsorbing} (iii) says that a stable coalition structure is, by definition, immune to the deviation of coalitions and is, therefore, a maximal element for the domination relation $\gg$.  Remark \ref{remarkabsorbing} (iv) says that from any non-stable coalition structure, there is a finite sequence of such structures that reaches a coalition structure in an absorbing set. This solution concept can be illustrated with the following example.\medskip

\begin{example}\label{ejemplo ring}

Consider the game given by the following table: \medskip
{\small
\begin{center}

\begin{tabular}{@{}ccccccccc@{}}
\hline
$\boldsymbol{1}$ && $\boldsymbol{2}$ && $\boldsymbol{3}$ && $\boldsymbol{4}$ && $\boldsymbol{5}$   \\ \hline
$12$ && $23$ && $34$ && $45$ && $15$ \\ 
$123$ && $123$ && $123$ && $34$ && $45$\\ 
$15$ && $12$ && $23$ && $4$  && $5$ \\ 
$1$ && $2$ && $3$ && &&   \\  \hline
\end{tabular} 
\end{center}
}\medskip
\begin{figure}[h!]
    \centering
    \begin{tikzpicture}[scale=0.30]
\node (1) at (6,-8.5) {{\small \textcolor{blue}{45}}};
\node (2) at (-2,-3) {{\small \textcolor{blue}{23}}};
\node (3) at (-0.5,4.5) {{\small \textcolor{blue}{15}}};
\node (4) at (12.5,4.5) {{\small \textcolor{blue}{34}}};
\node (5) at (14,-3) {{\small \textcolor{blue}{12}}};
\node (7) at (6,6) {{\small\{15,23,4\}}};
\node (8) at (13,1) {{\small\{15,34,2\}}};
\node (9) at (10.5,-6) {{\small\{12,34,5\}}};
\node (10) at (1.5,-6) {{\small\{12,45,3\}}};
\node (11) at (-1,1) {{\small\{23,45,1\}}};

\draw[blue,-{Latex[length=1.3mm]}] (7) to [bend left] (8);
\draw[blue,-{Latex[length=1.3mm]}] (8) to [bend left] (9);
\draw[blue,-{Latex[length=1.3mm]}] (9) to [bend left] (10);
\draw[blue,-{Latex[length=1.3mm]}] (10) to [bend left] (11);
\draw[blue,-{Latex[length=1.3mm]}] (11) to [bend left] (7);

\end{tikzpicture}
    \caption{Non-trivial absorbing set of Example \ref{ejemplo ring}.}
    \label{figura ejemplo 1}
\end{figure}
This game has two absorbing sets: a trivial absorbing set corresponding to the unique stable coalition structure $\{123,45\}$ and a non-trivial absorbing set depicted in Figure \ref{figura ejemplo 1}. 
\hfill $\Diamond$
 \end{example}\medskip

\section{A characterization of absorbing sets}\label{section reduced form}

In this section, we present a characterization of an absorbing set throughout a collection of sets of coalitions that condenses all its relevant information,  which we call  ``reduced form''.
 Before we state the characterization result,  we present the key ingredients of that construct.

 \subsection{Generalized rings}\label{subsection generalized ring}

As argued in the Introduction, there are sets of coalitions presenting cyclical behavior, and some of these sets induce the cyclical behavior of coalition structures in a non-trivial absorbing set. In this subsection, we formalize these sets that we call generalized rings.


For the game in Example \ref{ejemplo ring}, consider the collection of coalitions $\mathcal{B}=\{12,15,45,34,23\}.$  This collection presents cyclical behavior among coalitions since $12\succ 15\succ 45 \succ 34 \succ 23 \succ 12$. Furthermore, it induces cyclical behavior among coalition structures in the unique non-trivial absorbing set, because for each coalition in $\mathcal{B}$ there are two coalition structures of the non-trivial absorbing set such that one dominates the other one via that coalition, for instance,  $\{12,3,45\}$ dominates $\{12,34,5\}$ via $45$. Note that collection $\{123,15,45,34\}$ also presents cyclical behavior among coalitions because  $123\succ 15 \succ 45\succ 34 \succ 123$. However, it does not induce cyclical behavior among coalition structures since there is no pair of coalition structures in the non-trivial absorbing set such that one dominates the other via $123.$ 

Notice that the previous example suggests that we need to request an additional feature to a collection of coalitions presenting cyclical behavior among themselves in order to induce a cyclical behavior among coalition structures of a non-trivial absorbing set. A generalized ring is a collection of coalitions satisfying such an additional feature. Before presenting it formally, we define some notions that will be useful later.

Let $\mathcal{B} \subseteq \mathcal{K}$ be a collection of non-singleton coalitions. 
A set $\mathcal{M} \subseteq \mathcal{B}$ is \textit{maximal for} $\mathcal{B}$ if:
\begin{enumerate}[(i)]
\item $C, C' \in \mathcal{M}$ implies $C \cap C'=\emptyset$, and 
\item for each $C \in \mathcal{B} \setminus \mathcal{M}$  there is $C' \in  \mathcal{M}$ such that $C \cap C' \neq \emptyset.$
\end{enumerate}

Condition (i) says that coalitions of a maximal set are pairwise disjoint. Condition (ii) establishes that the set cannot be enlarged.
Denote by $\mathcalligra{M}\, _{\mathcal{B}}$ the collection of all maximal sets for $\mathcal{B}.$

 Given a collection of non-singleton coalitions  $\mathcal{B} \subseteq \mathcal{K}$ and a coalition $C \in \mathcal{K}\setminus \mathcal{B},$  we say that  $C$ \textit{breaks} $\mathcal{B}$  if  there is a maximal set $\mathcal{M}\in \mathcalligra{M}\, _{\mathcal{B}}$ such that:

\begin{enumerate}[(i)]
\item  there is a coalition $C'\in \mathcal{M}$ such that $C \cap C' \neq \emptyset,$ and 

\item $C \succ C''$ for each $C'' \in \mathcal{M}$ such that $C'' \cap C\neq \emptyset$. 
\end{enumerate}
Condition (i) says that there is always a coalition in  $\mathcal{M}$ with agents in common with $C$. Condition (ii) says that coalition $C$ is preferred to each coalition in $\mathcal{M}$ that has agents in common with $C$.




\medskip

 \begin{definition}\label{def generalize ring}
 Given a game $(N,\succ_N)$,  a set of non-singleton coalitions  $\mathcal{B} \subseteq \mathcal{K}$ is a \textbf{generalized ring} if 
\begin{enumerate}[(i)]
\item  $C \succ^T C'$ for each pair $C, C' \in \mathcal{B}$ with $C\neq C'$,\footnote{Here $\succ^T$ denotes the transitive closure of relation $\succ$. That is,  $C'\succ^{T}C$ if and only if there is a finite sequence of coalitions $C=C_{0},C_{1},\ldots,C_{J}=C'$ such that, for all
$j \in\{1,\ldots,J\}$, $C_{j}\succ C_{j-1}$.}
\item for each maximal set $\mathcal{M}\in \mathcalligra{M}\,_\mathcal{B}$ \  there is $C \in \mathcal{B} \setminus \mathcal{M}$ such that $C$ breaks $\mathcal{M}.$
\end{enumerate}     
\end{definition}\medskip

Notice that $|\mathcal{B}|\geq 3$.
Condition (i)  says that each coalition in the set is transitively preferred
to any other coalition in the set. Condition (ii) says that for each set of disjoint coalitions of the generalized ring, there is another coalition of the generalized ring that breaks the set.

Consider the set  $\{15,123,34,45\}$ in Example \ref{ejemplo ring}. This set fulfills Condition (i) but not Condition (ii) in Definition \ref{def generalize ring} because, for example, the maximal set $\{15,34\}$ cannot be broken by any other coalition \emph{in the set}. 
 Now, consider the set  $\{15,12,23,34,45\}$ in Example \ref{ejemplo ring}. This set fulfills both Conditions (i) and (ii) in Definition \ref{def generalize ring}. It is easy to see that such a set consists of all the non-singleton coalitions that belong to a coalition structure of the non-trivial absorbing set depicted in Figure \ref{figura ejemplo 1}.

The following example illustrates a set of coalitions fulfilling Condition (ii)  but not Condition (i) in Definition \ref{def generalize ring}.\medskip

\begin{example}\label{ejemplo 2 rings}

Consider the game given by the following table: \medskip
{\small
\begin{center}
\begin{tabular}{@{}ccccccccccccccc@{}}
\hline
$\boldsymbol{1}$ && $\boldsymbol{2}$ && $\boldsymbol{3}$ && $\boldsymbol{4}$ && $\boldsymbol{5}$ && $\boldsymbol{6}$  \\ \hline
$12$ && $23$ && $13$ && $45$ && $56$ && $46$ \\ 
$123$ &&$123$&&$123$&&$456$&&$456$&&$456$\\
$13$ && $12$ && $23$ && $46$ && $45$ && $56$ \\ 
$1$ && $2$ && $3$ && $4$ && $5$ && $6$   \\  \hline
\end{tabular} 
\end{center}
}\medskip

\noindent In this game there are two generalized rings: $\{13,12,23\}$ and $\{46,45,56\}$. Notice that their union fulfills Condition (ii) but not Condition (i). Therefore, such a union is not a generalized ring.
\hfill $\Diamond$
 \end{example}\medskip

By delving deeper into the analysis we can distinguish two different types of generalized rings depending on whether each coalition structure of a non-trivial absorbing set includes a maximal set of the generalized ring or not. In Example \ref{ejemplo ring}, the first situation occurs. The maximal sets of the generalized ring $\{12,23,34,45,15\}$  are $\{12,34\},\{12,45\},\{23,45\},$ $\{23,15\}$, and $\{34,15\}.$ Each coalition structure of the non-trivial absorbing set includes one of these maximal sets.
This is not always the case, though. To show this, we present the following example.\medskip

\begin{example}\label{ejemplo tipo 2}
Consider the game given by the following table: 
\medskip

{\small
\begin{center}
\begin{tabular}{@{}ccccccccccccccc@{}}
\hline
$\boldsymbol{1}$ && $\boldsymbol{2}$ && $\boldsymbol{3}$ && $\boldsymbol{4}$ && $\boldsymbol{5}$ && $\boldsymbol{6}$ && $\boldsymbol{7}$ && $\boldsymbol{8}$\\ \hline
$12$ && $23$ && $356$ && $145$ && $356$ && $678$&& $78$ && $678$\\ 
$145$ && $12$ && $23$ && $46$ && $145$ && $46$&& $678$&& $78$ \\ 
$1$ && $2$ && $3$ && $4$  && $5$ && $356$&& $7$ && $8$\\ 
 &&  &&  && && &&$6$ && &&   \\  \hline
\end{tabular} 
\end{center}
}
\begin{figure}[h!]
    \centering
    \begin{tikzpicture}[scale=0.37]
\node (1) at (6,-3.5) {{\small \textcolor{blue}{12}}};
\node (2) at (-2.7,2) {{\small \textcolor{blue}{23}}};
\node (3) at (0,13) {{\small \textcolor{blue}{356}}};
\node (4) at (12,13) {{\small \textcolor{blue}{46}}};
\node (5) at (14,0) {{\small \textcolor{blue}{145}}};

\node (16) at (16,15) {{\small \textcolor{blue}{12}}};
\node (17) at (22,1) {{\small \textcolor{blue}{46}}};
\node (18) at (17.7,-3) {{\small \textcolor{blue}{12}}};
\node (20) at (10,-10.5) {{\small \textcolor{blue}{46}}};
\node (21) at (2,-0.3) {{\small \textcolor{blue}{23}}};
\node (22) at (-5.8,-3) {{\small \textcolor{blue}{46}}};
\node (23) at (-7.3,5.7) {{\small \textcolor{blue}{145}}};
\node (24) at (3.5,4.5) {{\small \textcolor{blue}{23}}};
\node (25) at (-9.9,1) {{\small \textcolor{blue}{145}}};
\node (26) at (-4,15) {{\small \textcolor{blue}{356}}};
\node (27) at (6,-13) {{\small \textcolor{blue}{23}}};

\node (7) at (6,19) {{\small\{1,2,356,4,78\}}};
\node (8) at (15,5) {{\small\{1,2,3,46,5,78\}}};
\node (9) at (10.5,-6) {{\small\{145,2,3,6,78\}}};
\node (10) at (1.5,-6) {{\small\{12,3,4,5,6,78\}}};
\node (11) at (-3,5) {{\small\{1,23,4,5,6,78\}}};
\node (12) at (22,8) {{\small\{12,356,4,78\}}};
\node (13) at (17,-10) {{\small\{12,3,46,5,78\}}};
\node (14) at (-5,-10) {{\small\{1,23,46,5,78\}}};
\node (15) at (-10,8) {{\small\{145,23,6,78\}}};

\draw[blue,-{Latex[length=1.3mm]}] (7) to [bend left] (8);
\draw[blue,-{Latex[length=1.3mm]}] (8) to [bend left] (9);
\draw[blue,-{Latex[length=1.3mm]}] (9) to [bend right] (10);
\draw[blue,-{Latex[length=1.3mm]}] (10) to [bend left] (11);
\draw[blue,-{Latex[length=1.3mm]}] (11) to [bend left] (7);
\draw[blue,-{Latex[length=1.3mm]}] (7) to [bend left] (12);
\draw[blue,-{Latex[length=1.3mm]}] (12) to [bend left] (13);
\draw[blue,-{Latex[length=1.3mm]}] (8) to [bend left] (13);
\draw[blue,-{Latex[length=1.3mm]}] (8) to [bend right] (14);
\draw[blue,-{Latex[length=1.3mm]}] (9) to [bend right] (15);
\draw[blue,-{Latex[length=1.3mm]}] (10) to [bend right] (13);
\draw[blue,-{Latex[length=1.3mm]}] (13) to [bend left] (14);
\draw[blue,-{Latex[length=1.3mm]}] (11) to [bend right] (14);
\draw[blue,-{Latex[length=1.3mm]}] (11) to [bend left] (15);
\draw[blue,-{Latex[length=1.3mm]}] (14) to [bend left] (15);
\draw[blue,-{Latex[length=1.3mm]}] (15) to [bend left] (7);

\end{tikzpicture}
    \caption{The absorbing set of Example \ref{ejemplo tipo 2}.}
    \label{figura ejemplo 3}
\end{figure}
\noindent The unique generalized ring is $\{145,12,23,356,46\}$ and its  maximal sets are $\{145,23\},$ $\{12,356\},$ $\{12,46\},$ and $\{23,46\}$.  
Notice that coalition $356$ breaks the maximal set $\{145,23\}$ and has a non-empty intersection with \emph{two} coalitions: $145$ and $23$.
This means that coalition structures $\{1,2,356,4,78\}$, $\{1,2,3,46,5,78\}$, $\{145,2,3,6,78\}$, $\{12,3,4,5,6,78\}$, and $\{1,23,4,5,6,78\}$ belong to the non-trivial absorbing set depicted in Figure \ref{figura ejemplo 3}. Note that these coalition structures do not include any maximal set of the generalized ring. 
\hfill $\Diamond$
\end{example}

We now formally define the two types of generalized rings.\medskip

\begin{definition}
  A generalized ring $\mathcal{B}$ is \textbf{compact} if for each  $ \mathcal{M}\in \mathcalligra{M}\, _ \mathcal{B}$ \  and each $C\in \mathcal{B}$ such that $C$ breaks $ \mathcal{M}$, there is a unique $C'\in \mathcal{M}$ such that $C\cap C'\neq \emptyset.$ Otherwise, we say that the generalized ring is \textbf{non-compact}.
     
\end{definition}\medskip

Note that in a non-compact generalized ring there is  $ \mathcal{M}\in \mathcalligra{M}\, _ \mathcal{B}$, a coalition $C\in \mathcal{B}$ such that $C $ breaks $\mathcal{M}$, and at least two coalitions in  $\mathcal{M}$ that intersect $C$.
Observe that the generalized ring $\{12,23,34,45,15\}$ in Example \ref{ejemplo ring} is compact. In contrast, the generalized ring $\{145,12,23,356,46\}$ in Example \ref{ejemplo tipo 2} is non-compact.

\subsection{The reduced form of a coalition formation game}\label{subsection reduced form}

We have shown a close connection between non-trivial absorbing sets and generalized rings. One or more generalized rings generate a non-trivial absorbing set. However, not all generalized rings can generate a non-trivial absorbing set. The concept of reduced form of a coalition formation game enables us to distinguish between those generalized rings that generate a non-trivial absorbing set and those that do not. The ingredients of a reduced form are generalized rings,  sets consisting of one non-singleton coalition (called ``fixed components''), and a set of singletons.

Consider again the game in Table \ref{table para la reduced form} presented in the Introduction. Recall that this game has a unique non-trivial absorbing set. Furthermore, there are two generalized rings, $\{13,12,23\}$ and $\{45,46,56\}$, and:
\begin{itemize}
    \item Each coalition structure of the absorbing set contains coalition $78$.
    \item There are coalition structures of the absorbing set that contain no coalition of the generalized ring $\{13,12,23\}$, for instance, coalition structure $\{1,2,3,45,6,78\}$.
    \item Each coalition structure of the absorbing set contains a coalition of the generalized ring $\{45,46,56\}$. 

\end{itemize}

Notice that, from any coalition structure of the absorbing set, agent $6$ has the incentive to deviate to external coalition $678$. As coalition $78$ belongs to each coalition structure of the absorbing set, $78$ ``impedes coalition $678$ from being formed''. Generalized rings can also play this role. The formal definition of this notion is now presented. Given a set of coalitions $\mathcal{D}$, denote by $N(\mathcal{D})$  the set of agents that belong to (at least) one coalition in $\mathcal{D},$ that is, $N(\mathcal{D}) \equiv \bigcup_{C \in \mathcal{D}}C.$ 

\begin{definition}
    Given a generalized ring or a fixed component $\mathcal{B}$ and   a coalition $C \in \mathcal{K} \setminus \mathcal{B}$ such that $N(\mathcal{B}) \cap C \neq \emptyset,$ we say that \textbf{$\boldsymbol{\mathcal{B}$   impedes coalition $C}$ to be formed} if:
\begin{enumerate}[(i)]
\item $\mathcal{B}$ is a compact generalized ring and for each $\mathcal{M}\in \mathcalligra{M}\, _\mathcal{B}$ there is $C'\in\mathcal{B}$ with $C'\cap C \neq \emptyset$ and an agent $i \in C'\cap C$ such that $C' \succ_i C$.
\item $\mathcal{B}$ is a non-compact generalized ring or a fixed component and for each $C'\in \mathcal{B}$  there is an agent $i \in C'\cap C$ such that $C' \succ_i C$.\footnote{Notice that if $\mathcal{B}$ is a fixed component, there is only one such $C'$.} 
\end{enumerate}
\end{definition}\medskip

\noindent Condition (i) states that each maximal set of a compact generalized ring contains a coalition with an agent that prefers that coalition to the external coalition. Condition (ii) states that for each coalition of a non-compact generalized ring or for the coalition of each fixed component, an agent prefers that coalition(s) to the external coalition. 
Notice that, for compact generalized rings, maximal sets are the objects that impede the formation of external coalitions because only the maximal sets of a compact generalized ring appear in all the coalition structures of an associated absorbing set (see Figure \ref{figura ejemplo 1}). For non-compact generalized rings, however, their coalition(s) are the objects that impede the formation of external coalitions because there are coalition structures of an associated absorbing set that only have one coalition of the non-compact generalized ring.
For fixed components, the situation is similar to non-compact generalized rings because they consist of only one coalition.

Our proposal of reduced form for the non-trivial absorbing set of the game presented in Table \ref{table para la reduced form} contains the generalized ring $\{45,46,56\}$, the fixed component $\{78\}$, and the set of singletons $\{1,2,3\}$.
Notice that the generalized ring $\{45,46,56\}$ and the fixed component  $\{78\}$ of this reduced form ``protect'' each other from external blocking coalitions. However, they do not ``protect'' the generalized ring $\{13,12,23\}$, and for that reason, agents $1,2,3$ belong to the set of singletons of the reduced form. Next, we formalize the notion of protection. A collection (of sets of coalitions) $\mathcalligra{R} \  \ $ is \emph{complete} if it consists of generalized rings, fixed components, or a set of singletons that fulfills  (i) $\bigcup_{\mathcal{D}\in \mathcalligra{R} \  \ }N(\mathcal{D})=N$, together with (ii) $N(\mathcal{D})\cap N(\mathcal{D}')=\emptyset$ for each pair $\mathcal{D},\mathcal{D}'\in \mathcalligra{R} \  .$

\medskip
 \begin{definition}
    Let $\mathcalligra{R} \  \ $ be a complete collection.  A generalized ring or a fixed component $\mathcal{B}$ is said to be  \textbf{protected by $\mathcalligra{R} \ \ $} if for each coalition $C$ that breaks $\mathcal{B}$ there is $\mathcal{B}' \in \mathcalligra{R} \ \ $ such that  $\mathcal{B}'$ impedes  $C$ to be formed.\footnote{Notice that  $\mathcal{B}$ need not be included in $\mathcalligra{R} \ $.} 
 \end{definition}\medskip

Now, we are in a position to formally define the notion of a reduced form of a game.
\medskip

\begin{definition}\label{reduced form} Let $(N,\succ_N)$ be a coalition formation game. A \textbf{reduced form}  of $(N,\succ_N)$ is a complete collection $\mathcalligra{R} \ \ $  that may include  generalized rings, fixed components, or a set of singletons (denoted by $\mathcal{S}$) and satisfies the following:

\begin{enumerate}[(i)]

\item each generalized ring and each fixed component in $\mathcalligra{R} \ \ $  is protected by  $\mathcalligra{R} \ .$
\item each generalized ring and each fixed component not in $\mathcalligra{R} \ \ $ generated by agents in $\mathcal{S}$  is \emph{not} protected by $\mathcalligra{R} \ .$
\end{enumerate}
 
\end{definition}
\medskip

\noindent Condition (i) states that the reduced form protects itself from external deviations. Condition (ii) is a maximality condition: it states that if a generalized ring or a fixed component is added to the collection, the resulting collection no longer protects itself from external deviations. 

In Example \ref{ejemplo ring}, the game has two reduced forms: $\{\{15,12,23,34,45\}\}$ and  $\{\{123\},$ $\{45\}\}.$ The first reduced form contains only one generalized ring, while the second contains only two fixed components. In Example \ref{ejemplo 2 rings}, the game has four reduced forms: $\{\{123\},\{456\}\}$, $\{\{13,12,23\},\{456\}\}$, $\{\{123\},\{46,45,56\}\}$, and $\{\{13,12,23\},\{46,45,56\}\}$. The game presented in Table \ref{table para la reduced form} in the Introduction has only one reduced form, consisting of the generalized ring $\{45,46,56\}$, the fixed component $\{78\}$, and the set of singletons $\{1,2,3\}$.

\subsection{The characterization result}\label{subsection characterization}

In this subsection, we relate reduced forms of a game with absorbing sets of such a game. 
Recall that, by Remark \ref{remarkabsorbing}, a trivial absorbing set consists of a collection that has as its unique element a stable coalition structure. The following remark shows how a trivial absorbing set can be identified with a reduced form that has only fixed components and (possibly) a set of singletons.\medskip

\begin{remark}\label{remark reduced form con stable partition}
Suppose there is a reduced form that has no generalized ring. In that case, the coalition formation game has a trivial absorbing set, and by Remark \ref{remarkabsorbing} (iii) it has a stable coalition structure.
\end{remark}\medskip

Even though a reduced form and an absorbing set are different concepts, we show they are related. We characterize absorbing sets in terms of reduced forms. Formally,

\begin{theorem}\label{bijection}
For each coalition formation game,  each absorbing set generates a reduced form and, conversely, each reduced form generates an absorbing set.
\end{theorem}
The proof of Theorem \ref{bijection} is relegated to Appendix \ref{apendice}. Essential to this proof is the relation between rings in preferences on one hand and cycles of coalition structures on the other. This relation is analyzed in Appendix \ref{section ring en preferences}.

The game presented in Table \ref{table para la reduced form} in the Introduction has only one absorbing set, with fourteen coalition structures. Such absorbing set is associated to reduced form $\{\{1,2,3\},\{45,46,56\},\{78\}\}.$ The game presented in Example \ref{ejemplo ring} has two absorbing sets, one trivial and one non-trivial with five coalition structures. They are associated with reduced forms $\{\{123\},\{45\}\}$ and $\{\{15,12,23,34,45\}\}$, respectively. The game presented in Example \ref{ejemplo 2 rings} has four absorbing sets: one trivial, associated with reduced form $\{\{123\},\{456\}\}$; two non-trivial with three coalition structures each, associated with reduced forms   $\{\{13,12,23\},\{456\}\}$ and  $\{\{123\},\{46,45,56\}\}$, respectively; and one non-trivial with nine coalition structures, associated with reduced form $\{\{13,12,23\},\{46,45,56\}\}$. Notice that no intrinsic property of a generalized ring generates a non-trivial absorbing set. What is important is whether a generalized ring belongs to a reduced form. But, in a game, a generalized ring can belong to some reduced forms (and not to others), or even to none. For instance, in the example of Table \ref{table para la reduced form}, the generalized ring $\{13,12,23\}$ does not belong to the unique reduced form; whereas in Example \ref{ejemplo 2 rings} the generalized ring $\{46,45,56\}$ belongs to only two of the four different reduced forms.

\subsection{Reduced forms in roommate problems}\label{aplicacion al roommate}
An important class of coalition formation games is the class of roommate problems, introduced by \cite{gale1962college}. In these games, each agent
has preferences over all coalitions of cardinality two to which she belongs.
As is known, a roommate problem may not admit stable coalition structures. In this section we show that the notion of reduced form has already been identified in roommate problems under the name of ``maximal stable partitions'' by \cite{inarra2013absorbing}. 
To do this, we compare the notion of reduced form with the notion of ``stable partition'' \citep{tan1991necessary} in roommate problems. We show that every reduced form is a stable partition, but only ``maximal stable partitions'' are reduced forms. 

 \cite{tan1991necessary}
proves that a roommate problem has no stable coalition structures if and only if there
is a ``stable partition'' with an odd ring.\footnote{For a formal definition of ``ring'' see Appendix \ref{section ring en preferences}.}
In our terminology, a complete collection $\mathcalligra{P}\ \ $ is a \emph{stable partition} if, whenever coalition $C$ breaks a fixed component or a ring $\mathcal{B}\in\!\! \mathcalligra{P}\ \ $ there is another fixed component or ring $\mathcal{B}'\in\!\! \mathcalligra{P}\ \ $  such that $C\setminus N(\mathcal{B})\subsetneq N(\mathcal{B}')$ and $R \succ C$ for each $R\in \mathcal{B}'$ with $R\cap C \neq\emptyset$. Thus, the notion of protection in \cite{tan1991necessary} is weaker than ours.  This can be illustrated  with the following example: 


\begin{example}[Example 2 in \citealp{inarra2013absorbing}]\label{ejemploMolis} 
Consider the  game  given by this table:
{\small
\[
\ra{1.1}
\begin{tabular}{@{}ccccccccccccccccccccc@{}}\toprule

$\mathbf{1}$ && $\mathbf{2}$ && $\mathbf{3}$ && $\mathbf{4}$ && $\mathbf{5}$ && $\mathbf{6}$ && $\mathbf{7}$ && $\mathbf{8}$ && $\mathbf{9}$ && \textbf{a} \\ 
\cmidrule{1-1} \cmidrule{3-3} \cmidrule{5-5} \cmidrule{7-7} \cmidrule{9-9} \cmidrule{11-11} \cmidrule{13-13} \cmidrule{15-15} \cmidrule{17-17} \cmidrule{19-19} 
$12$ && $23$ && $13$ && $47$ && $58$ && $69$ && $57$ && $68$ && $49$ && $a$  \\ 
$13$ &&  $12$ && $23$ && $48$ && $59$ && $67$ && $67$ && $48$  && $59$ &&   \\ 
$14$ && $24$ && $34$ && $49$ && $57$ && $68$ && $17$ && $58$ &&  $69$ &&   \\ 
$15$ && $25$ & & $35$ && $45$ && $45$ && $46$ && $47$ && $78$ && $79$ &&  \\ 
$16$ && $26$ && $36$ &&$46$ && $56$ && $6$ &&$ 79$ && $89$ && $89$ &&  \\ 
$17$ && $27$ && $37$ && $14$ && $5$ &&  && $78$ && $8$ && $9$ && \\ 
$18$ && $28$ && $38$ && $24$ &&  &&  && $7$ &&  &&  && \\ 
$19$ && $29$ && $39$ && $34$ &&  &&  &&  &&  &&  && \\
$1$ && $2$ && $3$ && $4$ && &&  &&  &&  &&  && \\

\bottomrule
\end{tabular}%
\]}

\noindent This example has three stable partitions: $\{\{12,23,13\},\{48\},\{59\},\{67\},%
\{a\}\}$,\\ $\{\{12,23,13\},$ $\{49\},\{57\},\{68\},\{a\}\},$ and $\{\{12,23,13\},\{47
\},\{58\},\{69\},\{a\}\}$.\\ The first two are also reduced forms.  The third one is a complete collection but not a reduced form  because fixed component $\{47\}$ is not protected by the collection: Coalition $17$ breaks fixed component
$\{47\}$ and the collection does not impedes the formation of coalition $17$. Thus,  for the roommate problem, the notion of stable partition is weaker than the notion of reduced form. \hfill $\Diamond$
\end{example}

Following \cite{inarra2013absorbing}, we use the term \textit{maximal stable
partition} to refer to those stable partitions with the maximal set of satisfied agents,
i.e. agents with no incentive to change partners. The following result can then be established:

\begin{proposition}\label{propisition3}
For each roommate problem, each maximal stable partition generates a reduced form and, conversely, each reduced form generates a maximal stable partition.
\end{proposition}
\begin{proof}
Theorem 1 in \cite{inarra2013absorbing} proves that there is a bijection between maximal stable partitions and absorbing sets. Our Theorem \ref{bijection} states that there is a bijection between absorbing sets and reduced forms. Therefore, the result follows straightforwardly.
\end{proof}
\section{Application: Convergence to stability}\label{aplication}

In this section, we study an application of our characterization result.  
 A relevant question for coalition formation games with a stable coalition structure (henceforth, \emph{stable} coalition formation games) is whether agents always reach stability when left to interact on their own, or whether there is a need to introduce an arbitrator so that agents can achieve stability. There are coalition formation games, such as two-sided matching models (one-to-one and many-to-one) and one-sided matching models (roommate problems), in which no arbitrator is needed to reach stability \citep[see][for more details]{roth1990random,chung2000existence,klaus2005stable,kojima2008random,eriksson2008instability,diamantoudi2004random}.\footnote{For $TU$-games, the similar question of accessibility to the core is studied in \cite{koczy2004coalition}.} However, this is not always the case in more general coalition formation games. 
 
 The reduced form of a game provides us with a tool for presenting a necessary and sufficient condition for general coalition formation games to guarantee that agents achieve stability on their own. 
We say that a stable coalition formation game $(N,\succ _{N})$ exhibits \textit{convergence to stability} if for each non-stable coalition structure $\pi\in \Pi $ there is a stable coalition structure $\pi ^{N}\in \Pi $ such that $\pi ^{N}\gg ^{T}\pi $. 
Given a stable coalition formation game, it is clear that if no reduced form has a generalized ring, then the game exhibits convergence to stability. This is because each absorbing set then is trivial. However, if there is a reduced form with a generalized ring, then the game does not exhibit convergence to stability. Formally,

\begin{proposition}\label{proposition5}
A stable coalition formation game exhibits convergence to stability if and only if
none of its reduced forms has a generalized ring.
\end{proposition}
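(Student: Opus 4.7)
The plan is to combine three tools already in the paper: the bijection between absorbing sets and stable decompositions (Corollary \ref{bijection}), the identification of stable decompositions whose parties all satisfy $|\mathcal{B}|_{\mathcal{K}} \leq 1$ with stable coalition structures (Remark \ref{remark stable decomposition con stable partition}), and the fact that whenever a stable decomposition contains a ring component the absorbing set it generates is non-trivial (Proposition \ref{absorbing and stable decomp}). A key preliminary observation, which I would state explicitly, is that the three clauses in the definition of a party are mutually exclusive, so the hypothesis ``no stable decomposition of the game has a ring component'' is equivalent to ``every party of every stable decomposition is either a set of singletons or a single non-single coalition''.

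For the ``if'' direction, I would assume no stable decomposition contains a ring component. The preliminary observation together with Remark \ref{remark stable decomposition con stable partition} then identifies every stable decomposition with a stable coalition structure, and by Corollary \ref{bijection} every absorbing set reduces to a singleton $\{\pi^N\}$ for some stable $\pi^N$. Given any non-stable coalition structure $\pi$, Remark \ref{remarkabsorbing}(iv) yields an absorbing set $\mathcal{A}$ and a coalition structure $\pi' \in \mathcal{A}$ with $\pi' \gg^T \pi$; since $\mathcal{A}$ is a singleton, necessarily $\pi' = \pi^N$ is stable, which yields convergence to stability.

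For the ``only if'' direction, I would take a stable decomposition $\mathcalligra{D} \ $ containing a ring component and apply Proposition \ref{absorbing and stable decomp} to obtain a non-trivial absorbing set $\mathcal{A}_{\!\mathcalligra{D}}$. Remark \ref{remarkabsorbing}(ii) then guarantees that every $\pi \in \mathcal{A}_{\!\mathcalligra{D}}$ is non-stable. Fixing one such $\pi$, I would argue that no stable $\pi^N$ can satisfy $\pi^N \gg^T \pi$: by Remark \ref{remarkabsorbing}(iii) such a $\pi^N$ is itself a trivial absorbing set, hence lies outside the non-trivial $\mathcal{A}_{\!\mathcalligra{D}}$, so the closure condition in Definition \ref{absorbing} rules out $\pi^N \gg^T \pi$. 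This exhibits a non-stable coalition structure from which no stable coalition structure is reachable through the dynamics, falsifying convergence to stability. The only obstacle worth flagging is the preliminary equivalence noted above; once that is recorded, both implications are a mechanical application of results from Sections \ref{section preliminares} and \ref{section stable decomposition}, so I do not anticipate any technical difficulty.
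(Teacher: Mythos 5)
Your proposal is correct and follows essentially the same route as the paper: the forward direction via Proposition \ref{absorbing and stable decomp} and the closure property in Definition \ref{absorbing}, and the converse via Remark \ref{remark stable decomposition con stable partition} together with Remark \ref{remarkabsorbing}. Your explicit invocation of Remark \ref{remarkabsorbing}(ii) to certify that the members of the non-trivial absorbing set are themselves non-stable is a small point the paper leaves implicit, but it does not change the argument.
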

\begin{proof} 
Let $(N,\succ _{N})$ be a stable coalition formation game.

\noindent $(\Longrightarrow )$  Assume that $(N,\succ _{N})$ has a reduced form
with a generalized ring. By Proposition \ref{absorbing and stable decomp} in Appendix \ref{apendice}, the reduced form induces a non-trivial absorbing set $\mathcal{A
}$. Let $\pi ^{N}$ be a stable coalition structure, so $\pi^N \in \Pi \setminus \mathcal{A}$. Thus, by
Definition \ref{absorbing} there is no  $\pi \in \mathcal{A%
}$  such that $\pi^N \gg ^{T}\pi $. Therefore, $(N,\succ _{N})$ does not exhibit convergence to stability.

\noindent$(\Longleftarrow )$ Assume that $(N,\succ _{N})$ has every reduced form
with no generalized rings. By Remark \ref{remark reduced form con stable partition}, this means that each reduced form can be identified with a stable coalition structure, so the game has only trivial absorbing sets. By Remark \ref{remarkabsorbing} (iii) and (iv), for each non-stable coalition structure $\pi \in \Pi $ there is a stable coalition
structure $\pi ^{N}\in \Pi $ such that $\pi ^{N}\gg ^{T}\pi $. \
\end{proof}

Recall that, for the particular case of a roommate problem, there are either trivial or non-trivial absorbing sets \citep[see][for more details]{diamantoudi2004random,inarra2013absorbing}. Thus, if a roommate problem has a reduced form with no generalized rings, then \emph{all} reduced forms of the problem have no generalized rings. Therefore, the following result holds:
\begin{corollary}\label{convergencia en roommate}
    A stable roommate problem always exhibits convergence to stability.
\end{corollary}

Several domain restrictions have been studied throughout the literature to guarantee stability for general coalition formation games. Some of them are the ``common ranking property'' \citep{farrell1988partnerships}, the ``weak top coalition property'' \citep{banerjee2001core}, the ``ordinally balance property'' \citep{bogomolnaia2002stability}, and the ``pairwise alignment property'' \citep{pycia2012stability}.

A coalition formation game satisfies the \emph{common ranking property} if there is an ordering $\vartriangleright$ over $\mathcal{K}$ such that, for each $i\in N$ and each $C,C' \in \mathcal{K}$, we have $C \succ_i C'$ if and only if  $C\vartriangleright C'.$ It is straightforward to see that a game satisfying this property has no generalized ring, so the following result holds.
\begin{corollary}
    A coalition formation game satisfying the common ranking property exhibits convergence to stability.
\end{corollary}

The following example presents two games, one satisfying the ordinally balanced property\footnote{We say that a collection of coalitions $\mathcal{B}\subseteq N$ is \emph{balanced} if there is a vector of positive weights $\lambda _{S}$, such
that for each agent $i\in N,$ $\sum_{S \in \mathcal{B}: i \in S}\lambda_S=1$ \citep[see][]{bondareva1963some, shapley1967balanced}. 
 A coalition formation game satisfies the \emph{ordinally balance property} if for each balanced
collection of coalitions $\mathcal{B}$ there is a coalition structure $\pi\in \Pi $ such
that for each $i\in N$ there is $S\in \mathcal{B}$ with $i\in S$ such that $\pi(i)\succ _{i}S$.} and the other satisfying the weak top coalition property\footnote{A coalition $W\subseteq G\subseteq N$, is a \emph{weak top coalition} of $G$ if
it has an ordered coalition structure $(S_{1},...,S_{\ell})$ such that (i) any agent in $S_{1}$ weakly prefers $W$ to any subset of $G$ and (ii) for any $k>1$, any agent in $S_{k}$ needs cooperation of at least one agent in $\cup_{m<k}S_{m} $ in order to form a strictly better coalition than $W$ \citep{banerjee2001core}. A game satisfies the \emph{weak top coalition property} if each coalition has a weak top coalition.}. Although these classes of games impose some degree of commonality on agents' preferences guaranteeing stability, they may lack convergence to stability.

\begin{example}{\citep[see][Section 4]{bogomolnaia2002stability}.}\label{ejemplo 4}
Consider the following two coalition formation games: 
\begin{equation*}
\small{
\ra{1.1}
\begin{tabular}{@{}ccccccccccccc@{}}\toprule
$\boldsymbol{1}$ &  & $\boldsymbol{2}$ && $\boldsymbol{3}$  \\
\cmidrule{1-1} \cmidrule{3-3} \cmidrule{5-5} 
$12$ && $23$ && $13$ \\ 
$\mathbf{123}$ && $\mathbf{123}$ && $\mathbf{123}$ \\ 
$13$ && $12$ && $23$ \\ 
$1$ && $2$ && $3$ \\ \bottomrule
\end{tabular}%
\hspace{60 pt}
\ra{1.1}
\begin{tabular}{@{}ccccccccccccc@{}}\toprule
$\boldsymbol{1}$ &  & $\boldsymbol{2}$ && $\boldsymbol{3}$  \\
\cmidrule{1-1} \cmidrule{3-3} \cmidrule{5-5} 
$\mathbf{123}$ && $23$ && $13$ \\ 
$12$ && $12$ && $\mathbf{123}$ \\ 
$13$ && $\mathbf{123}$ && $23$ \\ 
$1$ && $2$ && $3$ \\ \bottomrule
\end{tabular}
}
\end{equation*}%
The game in the first table is ordinally balanced and the one in the second
table satisfies the weak top coalition property. In both cases, there are two reduced forms: $\{\{123\}\}$  with no generalized ring, guaranteeing stability; and $\{\{13,12,23\}\}$ with a generalized ring, implying lack of convergence by Proposition \ref{proposition5}.
\hfill $\Diamond$
\end{example}

In the remaining of this section, we apply Proposition \ref{proposition5} to analyze convergence to stability in games induced by bargaining solutions and rationing rules.

\subsection{Coalition formation games and bargaining solutions}\label{bargaining}

\cite{pycia2012stability} presents a model in which there is a set of agents, each endowed with a utility function, who form coalitions that produce outputs to be distributed among its members. He shows that under a rich domain of preferences, and some restrictions on coalitions, there is a stable coalition structure for each preference profile if and only if agents' preferences satisfy pairwise alignment. Agents' preferences are pairwise aligned if any two agents rank coalitions that contain both of them in the same way. Formally, a game $(N,\succ_N)$ satisfies the \emph{pairwise aligned property} if for all $C,C' \in \mathcal{K}$ and all $i,j\in
C\cap C^{\prime }$ it holds that $C\succ _{i}C^{\prime }$ if and only if $C\succ _{j}C^{\prime }.$

Given a set of agents $N$ and a set of coalitions $\mathcal{K} \subseteq 2^N \setminus \{\emptyset\},$ a  \textit{coalitional bargaining  problem} is a tuple $(U_{N},y(C)_{C\in \mathcal{K}})$ where $U_{N}=(U_{i})_{i\in N}$ is a  vector of utility functions $U_{i}:\mathbb{R}_+ \longrightarrow  \mathbb{R}_+$ and, for each $C\in \mathcal{K},$ $y(C)$ is the output produced by coalition $C$. When agent $i\in C$ gets the share $x$
of output $y(C)$ her utility gives her $U_{i}(x).$ Given  $C\in \mathcal{K}$, the \textit{bargaining problem for $C$} is $(U_C, y(C))$ where $U_C=(U_i)_{i\in C}$ is the utility vector of agents in $C$ and $y(C)$ is the output of coalition $C.$\footnote{We normalize all bargaining problems so that the disagreement point is equal to the origin.}   An \textit{allocation} for the bargaining problem for $C$,  is a vector $x=(x_i)_{i\in C}\in \mathbb{R}_+^C$ such that $\sum_{i \in C}x_i=y(C).$ A \textit{bargaining rule} is a mapping that associates an allocation with each bargaining problem.

Given a coalitional bargaining problem $(U_{N},y(C)_{C\in \mathcal{K}})$, a bargaining rule $F$ induces a coalition formation game $(N, \succ_N)$ in the following way: for each $i \in N$ and each pair $C,C' \in \mathcal{K}$ with $i \in C\cap C',$ if $F_i(U_C, y(C))>F_i(U_{C'}, y(C'))$ then  $C \succ_i C'.$ Note that for the game to be well-defined, no pair of bargaining problems  should allocate the same amount to agent $i.$ 
A bargaining rule is \emph{pairwise aligned} if the coalition formation game induced is pairwise aligned for each bargaining problem.

\begin{corollary}\label{bargaining theorem}
Any coalition formation game induced by a pairwise aligned bargaining rule exhibits convergence to stability.
\end{corollary}
\begin{proof}
Let $(N,  \succ_N)$ be a coalition formation game induced by a pairwise aligned bargaining rule. \cite{pycia2012stability} guarantees that $(N,  \succ_N)$ is  a stable coalition formation game with no generalized rings.\footnote{\cite{pycia2012stability} shows that each pairwise aligned bargaining rule induces a stable coalition formation game (Corollary 1 in \cite{pycia2012stability}) with a rich domain of preferences. Lemmata 3 and 4 in \cite{pycia2012stability} state that a coalition formation game with rich domain and pairwise aligned preferences has no ``$n$-cycles in preferences''. The non-existence of ``$n$-cycles in preferences'' in his setting implies the non-existence of generalized rings in our setting.} By Proposition \ref{proposition5}, $(N,\succ_N)$ exhibits convergence to stability.
\end{proof}

Given $C \in \mathcal{K}$, the \emph{Nash bargaining rule} \citep{nash1950} for problem $(U_C, y(C))$ is determined by solving:
$$
\max_{x_{i}\geq 0}\prod_{i\in C}U_{i}(x)\text{ subject to }\sum_{i\in
C}x_{i}= y(C).
$$
The Nash bargaining rule  is included in the class of pairwise aligned bargaining rules \citep[see][p.331]{pycia2012stability} and therefore guarantees convergence to stability.
Another important bargaining solution is Kalai-Smorodinsky's \citep{kalai1975other}.
 Given 
$C \in \mathcal{K}$, the \emph{Kalai-Smorodinsky bargaining rule} for problem $(U_C, y(C))$ is determined by solving:
$$
\frac{U_{i}(x_{i})}{U_{i}(y(C))}=\frac{U_{j}(x_{j})}{U_{j}(y(C))}\text{ for all }i,j\in C \text{ subject to }
\sum\nolimits_{i\in C}x_{i}=y(C). 
$$
Not all games induced by the Kalai-Smorodinsky bargaining rule are stable. 
However, even if one considers only stable coalition formation games induced by the Kalai-Smorodinsky solution, it is found that they may lack convergence to stability. We show this in the following example.

\begin{example} Consider a risk-averse firm $f$ and a risk-neutral firm $g$ that can employ either one or two risk-averse workers $w_1,w_2$ whose utilities are given by
$$
U_{f}(x)=x^{1/4},~U_{g}=x,~U_{w_1}(x)=x^{1/6},~U_{w_2}(x)=x^{1/2}. 
$$
The following table provides the coalitions and the allocation given by the Nash and the Kalai-Smorodinsky (K-S) bargaining solutions for different levels of outputs: 

\medskip

{\small
\begin{center}
\ra{1.1}
\begin{tabular}{@{}ccccccc@{}}\toprule
Coalitions & $f~w_1w_2$ & $g \ w_1w_2$ & $ f \ w_1$ & $f \ w_2$ & $ g \ w_1$ & $g \ w_2$  \\ \midrule
Outputs & $43$ & $83$ & $20$ & $37$ & $1$ & $1$ \\ \cmidrule{1-7}
Nash & $(11.7, 7.8, 23.5)$ & $(49.8, 8.3, 24.9)$ & $(12,8)$ & $(12.3, 24.7)$ & $(0.8,0.2)$ & $(0.7,0.3)$\\ 
K-S & $(12.7, 6.9, 23.4)$ & $(49.6,3.8, 29.6)$ & $(11.4, 8.6)$ & $(14.1, 22.9)$ & $(0.8,0.2)$ & $(0.6,0.4)$\\ \bottomrule
\end{tabular}%
\end{center}}

\medskip

\noindent The  coalition formation game induced by Nash bargaining is:

{\small
\begin{center}

\ra{1.1}
\begin{tabular}{@{}ccccccc@{}}\toprule
$\boldsymbol{f}$ && $\boldsymbol{g}$ && $\boldsymbol{w_1}$ && $\boldsymbol{w_2}$  \\ 
\cmidrule{1-1} \cmidrule{3-3} \cmidrule{5-5} \cmidrule{7-7}
$f \ w_2$ && $\boldsymbol{g~w_1w_2}$ && $\boldsymbol{g~w_1w_2}$ && $\boldsymbol{g~w_1w_2}$  \\ 
$f \ w_1$ && $g \ w_1$  && $f \ w_1$  && $f \ w_2$  \\ 
$f \ w_1w_2$ && $g \ w_2$  && $f \  w_1w_2$  && $f \  w_1w_2$  \\ 
$\boldsymbol{f}$ &&  $g$ && $g \ w_1$  && $g \ w_2$  \\ 
&&   && $w_1$  && $w_2$  \\ \bottomrule
\end{tabular}%
\end{center}}

\noindent Observe that the unique reduced form is $\{\{f\},\{g~w_1w_2\}\}$ composed of a singleton and a fixed component implying not only stability but also convergence to stability.
The  coalition formation game induced by Kalai-Smorodinsky bargaining is:
{\small
\begin{center}
\ra{1.1}
\begin{tabular}{@{}ccccccc@{}}\toprule
$\boldsymbol{f}$ && $\boldsymbol{g}$ && $\boldsymbol{w_1}$ && $\boldsymbol{w_2}$  \\ 
\cmidrule{1-1} \cmidrule{3-3} \cmidrule{5-5} \cmidrule{7-7}
$f \ w_2$ && $g~w_1w_2$ && $f \ w_1$ && $g~w_1w_2$  \\ 
$\boldsymbol{f \ w_1w_2}$&& $g \ w_1$  &&$\boldsymbol{f \ w_1w_2}$  &&$\boldsymbol{f \ w_1w_2}$  \\ 
$f \ w_1$&& $g \ w_2$  && $g \  w_1w_2$  && $f \  w_2$  \\ 
$f$&& $\boldsymbol{g}$  && $g \ w_1$  && $g \ w_2$  \\
&&   && $w_1$  && $w_2$  \\ \bottomrule
\end{tabular}%
\end{center}}

\noindent Observe that there are two reduced forms: $\{\{f~w_1w_2\},\{ g\}\}$ and $\{\{f \ w_1, f \ w_2, g \ w_1w_2\}\}$. The first one is composed only of a fixed component and a singleton, guaranteeing stability. In contrast, the second one is composed of a generalized ring, implying lack of convergence to stability. 
\hfill $\Diamond$
\end{example}

\subsection{Coalition formation games and rationing rules}\label{rationing}

In the model considered by \cite{gallo2018rationing}, there is a set of agents with claims and
each coalition of agents produces an output which is insufficient to meet
the claims of its members.  Formally, given set of agents $N$ and a set of coalitions $\mathcal{K} \subseteq 2^N \setminus \{\emptyset\},$ a  \textit{coalitional rationing problem} 
is a tuple $(d_{N},y(C)_{C\in \mathcal{K}})$ where $d_{N}=(d_{i})_{i\in N}\in \mathbb{R}%
_{+}^{N}$ is a claims vector,  $y(C) \in \mathbb{R}_+$ is the
output of coalition $C$  and $%
\sum_{i\in C}d_{i}\geq y(C)$ for each $C\in \mathcal{K}.$ Given $C\in \mathcal{K}$, the \emph{rationing problem for $C$} is $(d_C, y(C))$ where $d_C=(d_i)_{i\in C}$ is the claims' vector of agents in $C$ and $y(C)$ is the output of coalition $C.$   An \textit{allocation} for the rationing problem $(d_C, y(C))$ is a vector $x=(x_i)_{i\in C}\in \mathbb{R}_+^C$ such that $\sum_{i \in C}x_i=y(C).$ A \textit{rationing rule} is a mapping that associates an allocation with each rationing problem. 

Given a coalitional rationing problem $(d_{N},y(C)_{C\in \mathcal{K}})$, a rationing rule $F$ induces a coalition formation game $(N, \mathcal{K}, \succ_N)$ in the following way: for each $i \in N$ and each pair $C,C' \in \mathcal{K}$ with $i \in C\cap C',$ if $F_i(d_C, y(C))>F_i(d_{C'}, y(C'))$ then  $C \succ_i C'.$ Note that for the game to be well-defined, no pair of rationing problems  should allocate the same amount to agent $i.$

One of the most important classes of rules for rationing problems is the class of parametric rules
\citep[see][]{young1987dividing,stovall2014collective}. The proportional, constrained equal
awards, constrained equal losses, and the Talmud and reverse Talmud rules
are symmetric parametric rules while the sequential priority rule is an
asymmetric parametric rule.

Let $f$ be a collection of functions $\{f_{i}\}_{i\in N}$,\footnote{%
When the rule is symmetric, $f_{i}$ is the same for all agents.} where each $%
f_{i} : \mathbb{R}_{+}\times [a,b]\longrightarrow  \mathbb{R}_{+}$ is continuous and weakly increasing in $\lambda ,$ $\lambda \in
[a,b]$, $-\infty \leq a<b\leq \infty $ and for each $i\in N$ and $%
d_{i}\in  \mathbb{R}_{+}$, $f_{i}(d_{i},a)=0$ and $\ f_{i}(d_{i},b)=d_{i}$. 
Given  $f$, a \emph{parametric (rationing) rule} $F$  is defined as follows. For each problem $(d,y)$ and each $i\in N$, 
\[
F_{i}(d,y)=f_{i}(d_{i},\lambda )\ \text{where }\ \lambda \ \text{ is\
chosen\ so\ that }\sum\nolimits_{i\in N}f_{i}(d_{i},\lambda )=y.\footnote{In the literature, $f$ is said to be a \textit{parametric representation} of $F$.} 
\]

\begin{corollary}\label{rationing theorem}
    Any coalition formation game induced by a parametric rule exhibits convergence to stability.
\end{corollary}
\begin{proof}
Let $(N, \succ_N)$ be a coalition formation game induced by a parametric rule. \cite{gallo2018rationing} guarantee that $(N,\succ_N)$ is  a stable coalition formation game with no generalized rings.\footnote{\cite{gallo2018rationing} show that each parametric rationing rule induces a stable coalition formation game with no ``rings in preferences'' \citep[Proposition 1 in][]{gallo2018rationing}. The non-existence of ``rings in preferences'' in their setting implies the non-existence of generalized rings in our setting.} By Proposition \ref{proposition5}, $(N, \succ_N)$ exhibits convergence to stability.
\end{proof}

 The random arrival rule
\citep{o1982problem} fails to guarantee stability. Moreover, focusing only on stable coalition formation games induced by the random arrival
rule, we find that they may lack convergence to stability. 
The following example illustrates the different behavior of the proportional rule and the random arrival rule  when inducing coalition formation games.\footnote{
For each $C\in \mathcal{K}$, each $(d_C,y(C))$, and each $i\in C$, 

\textbf{Proportional rule, $\boldsymbol{Prop}$:}   $$Prop_i (d_C,y(C))=\frac{d_i}{\sum_{j\in C}d_j}y(C). $$

\textbf{Random arrival rule, $\boldsymbol{RA}$:}   $$RA_i(d_C,y(C))=\frac{1}{|C|!}\left(  \sum_{\lessdot \in \mathcal{O}^C}\min\left\{ d_i, \max \left\{ y(C)-\sum_{j\in C,~j\lessdot i}d_j,0 \right\} \right\}\right),$$ where  $\mathcal{O}^C$ denote the class of strict orders on $C$, with generic element $\lessdot$.

}
\begin{example}\label{KS} Assume that there is a call to finance research projects and that several researchers are ready to submit a project. Each researcher has an aspiration, which depends on her CV, regarding the compensation she believes she deserves. Let  $N=\{1,2,3,4,5,6,7,8,9\}$  be the set of researchers with the following aspirations: 
\[
c_{1}=c_{2}=c_{5}=c_{7}=c_{8}=c_{9}=50,~c_{3}=c_{4}=c_{6}=10. 
\]%
Researchers can form various teams but participate in only one. Funding
depends on the quality of the project, which in turn depends on the team
composition, and there is not enough money to meet the aspirations of all possible teams. Assume that the money assigned to each potential team is distributed according to the random arrival and proportional rules. The table below shows the
coalitions, the outputs, and the distribution of the outputs obtained from these two rules.
\vspace{-10 pt}
{\small
\begin{center}
\ra{1.1}
\begin{tabular}{@{}ccccccccc@{}}
\toprule
Coalitions & $\{15\}$ & $\{45\}$ & $\{123\}$ & $\{34\}$ & $\{68\}$ & $%
\{78\}$ & $\{679\}$ & $\{26\}$ \\ 
\midrule

Outputs & $34$ & $20$ & $53$ & $9$ & $9$ & $34$ & $53$ & $20$ \\ 
\cmidrule{1-9}
$RA$ & $(17,17)$ & $(5,15)$ & $\left(\frac{73}{3},\frac{73}{3},\frac{13}{3}\right)$ & $\left(
\frac{9}{2},\frac{9}{2}\right)$ & $\left(\frac{9}{2},\frac{9}{2}\right)$ & $(17,17)$ & $\left(\frac{13}{3},\frac{73}{3},\frac{73}{3}\right)$ & $(15,5)$ \\ 
$Prop$ & $(17,17)$ & $(\frac{10}{3},\frac{50}{3})$ & $\left(\frac{265}{11},\frac{265}{11},\frac{53}{11}\right)$ & $\left(
\frac{9}{2},\frac{9}{2}\right)$ & $\left(\frac{3}{2},\frac{15}{2}\right)$ & $(17,17)$ &  $\left(\frac{53}{11},\frac{265}{11},\frac{265}{11}\right)$ & $(\frac{50}{3},\frac{10}{3})$ \\
\bottomrule
\end{tabular}
\end{center}}
\bigskip

\noindent The coalition formation game induced by random arrival rationing is:
{\small
\begin{center}
\ra{1.1}
\begin{tabular}{@{}ccccccccccccccccc@{}}\toprule
$\boldsymbol{1}$ &  & $\boldsymbol{2}$ && $\boldsymbol{3}$ && $\boldsymbol{4}$ && $\boldsymbol{5}$ && $\boldsymbol{6}$ && $\boldsymbol{7}$ && $\boldsymbol{8}$ && $\boldsymbol{9}$ \\
\cmidrule{1-1} \cmidrule{3-3} \cmidrule{5-5} \cmidrule{7-7} \cmidrule{9-9} \cmidrule{11-11} \cmidrule{13-13} \cmidrule{15-15} \cmidrule{17-17}
$123$ && $123$ && $\boldsymbol{34}$ && $45$ && $\boldsymbol{15}$ && $\boldsymbol{26}$ && $679$ && $\boldsymbol{78}$ && $679$ \\ 
$\boldsymbol{15}$ && $\boldsymbol{26}$ && $123$ && $\boldsymbol{34}$ && $45$ && $68$ && $\boldsymbol{78}$ && $68$ && $\boldsymbol{9}$ \\ 
 $1$ && $2$   &&$3$  && $4$  &&$5$   && $679$ && $7$  && $8$  &&   \\ 
  &&   &&  &&   &&   && $6$ && &&   &&   \\ 
 \bottomrule
\end{tabular}%
\end{center}}
\bigskip

\noindent Observe that there are two reduced forms: $\{\{15\},\{26\},\{34\},\{78\},\{9\}\}$ and $\{\{123\},\{45\},\{679,68,78\}\}$. The first one is composed of fixed components and a singleton, guaranteeing stability. In contrast, the second one is composed of two fixed components and a generalized ring, implying a lack of convergence to stability. 
The coalition formation game induced by proportional rationing is:

{\small
\begin{center}
\ra{1.1}
\begin{tabular}{@{}ccccccccccccccccc@{}}\toprule
$\boldsymbol{1}$ &  & $\boldsymbol{2}$ && $\boldsymbol{3}$ && $\boldsymbol{4}$ && $\boldsymbol{5}$ && $\boldsymbol{6}$ && $\boldsymbol{7}$ && $\boldsymbol{8}$ && $\boldsymbol{9}$ \\
\cmidrule{1-1} \cmidrule{3-3} \cmidrule{5-5} \cmidrule{7-7} \cmidrule{9-9} \cmidrule{11-11} \cmidrule{13-13} \cmidrule{15-15} \cmidrule{17-17}
$\boldsymbol{123}$ && $\boldsymbol{123}$ && $\boldsymbol{123}$ && $34$ && $15$ && $26$ && $\boldsymbol{679}$ && $78$ && $\boldsymbol{679}$ \\ 
$15$ && $26$ && $34$ && $\boldsymbol{45}$ && $\boldsymbol{45}$ && $\boldsymbol{679}$ && $78$ && $68$ && $9$ \\ 
 $1$ && $2$   &&$3$  && $4$  &&$5$   && $68$ && $7$  && $\boldsymbol{8}$  &&   \\ 
  &&   &&  &&   &&   && $6$ && &&   &&   \\ 
 \bottomrule
\end{tabular}%
\end{center}}

\noindent The only reduced form is $\{\{123\},\{45\},\{679\},\{8\}\}$, with no generalized ring.
\hfill $\Diamond$
\end{example}

\section{Final remarks}\label{Discusion}
Our paper contributes to the literature on coalition formation by characterizing absorbing sets in terms of reduced forms of the game.
We show that a reduced form condenses all the relevant information that can be extracted from an absorbing set without having to compute it. Furthermore, the reduced form of a game has the advantage over absorbing sets that the domination relation between coalition structures need not be considered.

Given a reduced form, we can reconstruct its associated absorbing set.
Each coalition structure of this set contains:
\begin{enumerate}[(i)]
    \item  a maximal set of each compact generalized ring,
    \item  a subset of a maximal set of each non-compact generalized ring,\footnote{There are coalition structures containing only one coalition of a non-compact generalized ring and others containing a maximal set of the coalitions in that generalized ring.} and
     \item all the fixed components.
\end{enumerate}
Agents gathered in the set of singletons of the reduced form appear in some coalition structures of the absorbing set forming non-singleton coalitions and in others as singletons.

In this paper, we also use the notion of reduced form to shed light on the problem of convergence to stability in coalition formation games for some economic environments. 

  We study stable coalition formation games satisfying properties such as common ranking, ordinary balance, weak top coalition, and pairwise alignment. Furthermore, stable coalition formation games induced by bargaining solutions and sharing rules are also considered. 

Finally, our approach opens up several interesting research directions, including the following: The paper relies on a dynamic process between coalition structures which is consistent with the standard blocking
definition in that all members of the blocking coalition become strictly better off,
and assumes that abandoned agents appear as singletons in the newly formed
coalition structure. However, another possibility is for the abandoned agents to
 get together as in, for instance, the $\delta$-model of \cite{hart1983endogenous} or the marriage model of \cite{tamura1993transformation}. How our notion of reduced form adapts to this new dynamic is an open question.

Another interesting direction to explore is considering indifferences in preference relations. We think such an adaptation is not straightforward but could be accomplished by carefully adjusting several of the definitions involved in our analysis. We leave this interesting extension for further research.


\begin{thebibliography}{38}
\newcommand{\enquote}[1]{``#1''}
\expandafter\ifx\csname natexlab\endcsname\relax\def\natexlab#1{#1}\fi

\bibitem[\protect\citeauthoryear{Banerjee, Konishi, and S{\"o}nmez}{Banerjee
  et~al.}{2001}]{banerjee2001core}
\textsc{Banerjee, S., H.~Konishi, and T.~S{\"o}nmez} (2001): \enquote{Core in a
  simple coalition formation game,} \emph{Social Choice and Welfare}, 18,
  135--153.

\bibitem[\protect\citeauthoryear{Barber{\`a}, Bevi{\'a}, and
  Ponsat{\'\i}}{Barber{\`a} et~al.}{2015}]{barbera2015meritocracy}
\textsc{Barber{\`a}, S., C.~Bevi{\'a}, and C.~Ponsat{\'\i}} (2015):
  \enquote{Meritocracy, egalitarianism and the stability of majoritarian
  organizations,} \emph{Games and Economic Behavior}, 91, 237--257.

\bibitem[\protect\citeauthoryear{Bogomolnaia and Jackson}{Bogomolnaia and
  Jackson}{2002}]{bogomolnaia2002stability}
\textsc{Bogomolnaia, A. and M.~O. Jackson} (2002): \enquote{The stability of
  hedonic coalition structures,} \emph{Games and Economic Behavior}, 38,
  201--230.

\bibitem[\protect\citeauthoryear{Bondareva}{Bondareva}{1963}]{bondareva1963some}
\textsc{Bondareva, O.~N.} (1963): \enquote{Some applications of linear
  programming methods to the theory of cooperative games,} \emph{Problemy
  Kibernetiki}, 10, 119--139.

\bibitem[\protect\citeauthoryear{Chung}{Chung}{2000}]{chung2000existence}
\textsc{Chung, K.-S.} (2000): \enquote{On the existence of stable roommate
  matchings,} \emph{Games and Economic Behavior}, 33, 206--230.

\bibitem[\protect\citeauthoryear{Demuynck, Herings, Saulle, and Seel}{Demuynck
  et~al.}{2019}]{demuynck2019myopic}
\textsc{Demuynck, T., P.~J.-J. Herings, R.~D. Saulle, and C.~Seel} (2019):
  \enquote{The myopic stable set for social environments,} \emph{Econometrica},
  87, 111--138.

\bibitem[\protect\citeauthoryear{Diamantoudi, Miyagawa, and Xue}{Diamantoudi
  et~al.}{2004}]{diamantoudi2004random}
\textsc{Diamantoudi, E., E.~Miyagawa, and L.~Xue} (2004): \enquote{Random paths
  to stability in the roommate problem,} \emph{Games and Economic Behavior},
  48, 18--28.

\bibitem[\protect\citeauthoryear{Echenique and Yenmez}{Echenique and
  Yenmez}{2007}]{echenique2007solution}
\textsc{Echenique, F. and M.~B. Yenmez} (2007): \enquote{A solution to matching
  with preferences over colleagues,} \emph{Games and Economic Behavior}, 59,
  46--71.

\bibitem[\protect\citeauthoryear{Eriksson and H{\"a}ggstr{\"o}m}{Eriksson and
  H{\"a}ggstr{\"o}m}{2008}]{eriksson2008instability}
\textsc{Eriksson, K. and O.~H{\"a}ggstr{\"o}m} (2008): \enquote{Instability of
  matchings in decentralized markets with various preference structures,}
  \emph{International Journal of Game Theory}, 36, 409--420.

\bibitem[\protect\citeauthoryear{Farrell and Scotchmer}{Farrell and
  Scotchmer}{1988}]{farrell1988partnerships}
\textsc{Farrell, J. and S.~Scotchmer} (1988): \enquote{Partnerships,} \emph{The
  Quarterly Journal of Economics}, 103, 279--297.

\bibitem[\protect\citeauthoryear{Gale and Shapley}{Gale and
  Shapley}{1962}]{gale1962college}
\textsc{Gale, D. and L.~S. Shapley} (1962): \enquote{College admissions and the
  stability of marriage,} \emph{The American Mathematical Monthly}, 69, 9--15.

\bibitem[\protect\citeauthoryear{Gallo and Inarra}{Gallo and
  Inarra}{2018}]{gallo2018rationing}
\textsc{Gallo, O. and E.~Inarra} (2018): \enquote{Rationing rules and stable
  coalition structures,} \emph{Theoretical Economics}, 13, 933--950.

\bibitem[\protect\citeauthoryear{Greenberg and Weber}{Greenberg and
  Weber}{1986}]{greenberg1986strong}
\textsc{Greenberg, J. and S.~Weber} (1986): \enquote{Strong Tiebout equilibrium
  under restricted preferences domain,} \emph{Journal of Economic Theory}, 38,
  101--117.

\bibitem[\protect\citeauthoryear{Hart and Kurz}{Hart and
  Kurz}{1983}]{hart1983endogenous}
\textsc{Hart, S. and M.~Kurz} (1983): \enquote{Endogenous formation of
  coalitions,} \emph{Econometrica}, 1047--1064.

\bibitem[\protect\citeauthoryear{Herings, Saulle, and Seel}{Herings
  et~al.}{2021}]{herings2021last}
\textsc{Herings, P. J.-J., R.~D. Saulle, and C.~Seel} (2021): \enquote{The last
  will be first, and the first last: Segregation in societies with relative
  pay-off concerns,} \emph{The Economic Journal}, 131, 2119--2143.

\bibitem[\protect\citeauthoryear{Iehl{\'e}}{Iehl{\'e}}{2007}]{iehle2007core}
\textsc{Iehl{\'e}, V.} (2007): \enquote{The core-partition of a hedonic game,}
  \emph{Mathematical Social Sciences}, 54, 176--185.

\bibitem[\protect\citeauthoryear{Inal}{Inal}{2015}]{inal2015core}
\textsc{Inal, H.} (2015): \enquote{Core of coalition formation games and
  fixed-point methods,} \emph{Social Choice and Welfare}, 45, 745--763.

\bibitem[\protect\citeauthoryear{Inarra, Larrea, and Molis}{Inarra
  et~al.}{2013}]{inarra2013absorbing}
\textsc{Inarra, E., C.~Larrea, and E.~Molis} (2013): \enquote{Absorbing sets in
  roommate problems,} \emph{Games and Economic Behavior}, 81, 165--178.

\bibitem[\protect\citeauthoryear{Jackson and Watts}{Jackson and
  Watts}{2002}]{jackson2002evolution}
\textsc{Jackson, M.~O. and A.~Watts} (2002): \enquote{The evolution of social
  and economic networks,} \emph{Journal of Economic Theory}, 106, 265--295.

\bibitem[\protect\citeauthoryear{Kalai and Schmeidler}{Kalai and
  Schmeidler}{1977}]{kalai1977admissible}
\textsc{Kalai, E. and D.~Schmeidler} (1977): \enquote{An admissible set
  occurring in various bargaining situations,} \emph{Journal of Economic
  Theory}, 14, 402--411.

\bibitem[\protect\citeauthoryear{Kalai and Smorodinsky}{Kalai and
  Smorodinsky}{1975}]{kalai1975other}
\textsc{Kalai, E. and M.~Smorodinsky} (1975): \enquote{Other solutions to
  Nash’s bargaining problem,} \emph{Econometrica}, 43, 513--518.

\bibitem[\protect\citeauthoryear{Klaus and Klijn}{Klaus and
  Klijn}{2005}]{klaus2005stable}
\textsc{Klaus, B. and F.~Klijn} (2005): \enquote{Stable matchings and
  preferences of couples,} \emph{Journal of Economic Theory}, 121, 75--106.

\bibitem[\protect\citeauthoryear{K{\'o}czy and Lauwers}{K{\'o}czy and
  Lauwers}{2004}]{koczy2004coalition}
\textsc{K{\'o}czy, L.~{\'A}. and L.~Lauwers} (2004): \enquote{The coalition
  structure core is accessible,} \emph{Games and Economic Behavior}, 48,
  86--93.

\bibitem[\protect\citeauthoryear{Kojima and {\"U}nver}{Kojima and
  {\"U}nver}{2008}]{kojima2008random}
\textsc{Kojima, F. and M.~U. {\"U}nver} (2008): \enquote{Random paths to
  pairwise stability in many-to-many matching problems: a study on market
  equilibration,} \emph{International Journal of Game Theory}, 36, 473--488.

\bibitem[\protect\citeauthoryear{Nash}{Nash}{1950}]{nash1950}
\textsc{Nash, J.} (1950): \enquote{The bargaining problem,}
  \emph{Econometrica}, 28, 155--162.

\bibitem[\protect\citeauthoryear{Olaizola and Valenciano}{Olaizola and
  Valenciano}{2014}]{olaizola2014asymmetric}
\textsc{Olaizola, N. and F.~Valenciano} (2014): \enquote{Asymmetric flow
  networks,} \emph{European Journal of Operational Research}, 237, 566--579.

\bibitem[\protect\citeauthoryear{O'Neill}{O'Neill}{1982}]{o1982problem}
\textsc{O'Neill, B.} (1982): \enquote{A problem of rights arbitration from the
  Talmud,} \emph{Mathematical Social Sciences}, 2, 345--371.

\bibitem[\protect\citeauthoryear{Pycia}{Pycia}{2012}]{pycia2012stability}
\textsc{Pycia, M.} (2012): \enquote{Stability and preference alignment in
  matching and coalition formation,} \emph{Econometrica}, 80, 323--362.

\bibitem[\protect\citeauthoryear{Ray}{Ray}{2007}]{ray2007game}
\textsc{Ray, D.} (2007): \emph{A game-theoretic perspective on coalition
  formation}, Oxford University Press.

\bibitem[\protect\citeauthoryear{Ray and Vohra}{Ray and
  Vohra}{2015}]{ray2015coalition}
\textsc{Ray, D. and R.~Vohra} (2015): \enquote{Coalition formation,}
  \emph{Handbook of game theory with economic applications}, 4, 239--326.

\bibitem[\protect\citeauthoryear{Roth}{Roth}{2018}]{roth2018marketplaces}
\textsc{Roth, A.~E.} (2018): \enquote{Marketplaces, markets, and market
  design,} \emph{American Economic Review}, 108, 1609--58.

\bibitem[\protect\citeauthoryear{Roth and Vande~Vate}{Roth and
  Vande~Vate}{1990}]{roth1990random}
\textsc{Roth, A.~E. and J.~H. Vande~Vate} (1990): \enquote{Random paths to
  stability in two-sided matching,} \emph{Econometrica}, 1475--1480.

\bibitem[\protect\citeauthoryear{Schwartz}{Schwartz}{1970}]{schwartz1970possibility}
\textsc{Schwartz, T.} (1970): \enquote{On the possibility of rational policy
  evaluation,} \emph{Theory and Decision}, 1, 89--106.

\bibitem[\protect\citeauthoryear{Shapley}{Shapley}{1967}]{shapley1967balanced}
\textsc{Shapley, L.~S.} (1967): \enquote{On balanced sets and cores,}
  \emph{Naval Research Logistics Quarterly}, 14, 453--460.

\bibitem[\protect\citeauthoryear{Stovall}{Stovall}{2014}]{stovall2014collective}
\textsc{Stovall, J.~E.} (2014): \enquote{Collective rationality and monotone
  path division rules,} \emph{Journal of Economic Theory}, 154, 1--24.

\bibitem[\protect\citeauthoryear{Tamura}{Tamura}{1993}]{tamura1993transformation}
\textsc{Tamura, A.} (1993): \enquote{Transformation from arbitrary matchings to
  stable matchings,} \emph{Journal of Combinatorial Theory, Series A}, 62,
  310--323.

\bibitem[\protect\citeauthoryear{Tan}{Tan}{1991}]{tan1991necessary}
\textsc{Tan, J.~J.} (1991): \enquote{A necessary and sufficient condition for
  the existence of a complete stable matching,} \emph{Journal of Algorithms},
  12, 154--178.

\bibitem[\protect\citeauthoryear{Young}{Young}{1987}]{young1987dividing}
\textsc{Young, H.~P.} (1987): \enquote{On dividing an amount according to
  individual claims or liabilities,} \emph{Mathematics of Operations Research},
  12, 398--414.

\end{thebibliography}

\appendix
\section{Relation between rings and cycles}\label{section ring en preferences}

This appendix studies the relationship between two well-known concepts in the literature: Cycles of coalition structures and rings of coalitions.
We show that each cycle induces a ring and each ring induces a cycle. This result links non-trivial absorbing sets with generalized rings in the following way: Each non-trivial absorbing set consists of one or more cycles, each cycle induces a ring, and a generalized ring can be constructed by merging all overlapping rings.

A cycle of coalition structures is an ordered set of coalition structures that shows cyclical behavior. That is, for each pair of consecutive coalition structures of the ordered set, the successor coalition structure dominates its predecessor.\footnote{Here, given an ordered set of coalition structures with $k$ elements, two coalition structures are \emph{consecutive} if their indices differ by one (modulo $k$).} Formally,  
\begin{definition}
An ordered set of coalition structures $(\pi_1,\ldots,\pi_{J})\subset \Pi,$ with $J \geq 3,$
 is a \textbf{cycle}  if $\pi_{j+1} \gg \pi_{j}$ for $j=1,\ldots,J$ subscript modulo $J$.
\end{definition}

A ring of coalitions is an ordered set of non-singleton coalitions that behaves cyclically, i.e.,  for each pair of consecutive coalitions\footnote{Here, given an ordered set of coalitions with $k$ elements, two coalitions are \emph{consecutive} if their indices differ by one (modulo $k$). Note that our concept of consecutive coalition is not related with the notion presented in \cite{greenberg1986strong}.} of the ordered set the successor coalition is preferred to its predecessor \citep[see][among others]{diamantoudi2004random,inarra2013absorbing,tan1991necessary}.\footnote{There are other ways of defining cyclicity between coalitions  \citep[see][ for more details]{pycia2012stability,inal2015core}.}  Formally, 
\begin{definition}\label{definicion de ring}
 An ordered set of non-singleton coalitions $(R_1, \ldots,R_J)\subseteq \mathcal{K}$, with $J\geq 3$,  is a \textbf{ring} if 
$R_{j+1}\succ R_{j}$ for $j=1,\ldots,J$ subscript modulo $J$.
\end{definition}


Next,  we present an algorithm that constructs a ring of coalitions from a cycle of coalition structures. 
Let $\! \mathcalligra{C}\; =(\pi_1,\ldots,\pi_{J}) $ be a cycle of coalition structures, let  $C_j$ denote the coalition that is formed in $\pi_j,$ i.e.  $\pi_j \gg \pi_{j-1}$ via $C_j$, and consider the ordered set $\mathcal{C}=(C_1, \ldots,C_{J}).$ 
To construct a ring, proceed as follows:\bigskip

\begin{center}
\begin{tabular}{l l}
\hline \hline
\multicolumn{2}{l}{\textbf{Algorithm:}}\vspace*{10 pt}\\

\textbf{Step 1} & Set $\overline{R}_1$ as any coalition in $\mathcal{C}$.  \\
\textbf{Step $\boldsymbol{t}$} &  Set \\
& $\overline{R}_t \equiv \min_{r\geq 1}\{C_{j+r} \text{ such that } C_j=\overline{R}_{t-1} \text{ and } C_j \cap C_{j+r} \neq \emptyset$ \\
&\hspace{70 pt} $\text{ with } j+r  \text{ mod }J\}.$\\
& \texttt{IF} $\overline{R}_t=\overline{R}_s$ for $s<t,$ \\
& \hspace{20 pt}\texttt{THEN} set $(\overline{R}_{s+1}, \ldots, \overline{R}_t),$ and \texttt{STOP}. \\
& \texttt{ELSE} continue to Step $t+1.$ \\
\hline \hline
\end{tabular}
\end{center}
\bigskip

\noindent Notice that in each step of the algorithm, a different coalition of $\mathcal{C}$ is selected except in the last step, where one of the previously selected coalitions is singled out. Therefore, the algorithm stops in at most $J+1$ steps (recall that $J=|\mathcal{C}|$).  

The following proposition establishes that the ordered set $(\overline{R}_{s+1}, \ldots, \overline{R}_t),$ where the above algorithm determines $s$, is a ring. Conversely, for each ring, a cycle of coalition structures can be constructed. This result is crucial for the proof of the main finding of the paper, Theorem \ref{bijection}, presented in Appendix \ref{apendice}.
\begin{proposition}\label{Th ring cycle}
A coalition formation game has a ring of coalitions if and only if it has a cycle of coalition structures.
\end{proposition}
\begin{proof}
$(\Longleftarrow)$
Let $\! \mathcalligra{C} \; \ $ be a cycle of coalition structures. The application of the above algorithm results in the ordered set $(\overline{R}_{s+1}, \ldots, \overline{R}_t)$. To simplify notation, we rename the elements of the ordered set and write $(R_1, \ldots, R_\ell)=(\overline{R}_{s+1}, \ldots, \overline{R}_t).$ We claim that the ordered set $(R_1, \ldots, R_\ell)$ thus constructed is a ring, i.e. for each $R_{j+1}$ and $R_j$ in the ordered set, $R_{j+1}\succ R_j$ and $\ell\geq 3$. Take any coalition $R_j.$ Coalition $R_{j+1}$ (modulo $\ell$) is the closest coalition that has a non-empty intersection with $R_j$ (following the modular order of the coalition structures in cycle $\! \mathcalligra{C} \ \ \,$), so all the coalition structures between the one in which $R_j$ breaks and the one in which $R_{j+1}$ breaks contain coalition $R_j$.
Let $\pi$ and $\pi'$ be the two consecutive coalition structures in $\! \mathcalligra{C}  \ $  such that $\pi' \gg \pi$ via $R_{j+1}$. $R_{j+1}$ is the breaking coalition, so $R_{j+1}$ belongs to $\pi'$. Furthermore,  since $R_j$ belongs to $\pi$ and  $R_{j+1}\cap R_j\neq \emptyset$, by definition of the domination relation $\gg$,  $R_{j+1}\succ R_j$. Furthermore, 
$\ell \geq 3.$ This holds for the following two facts: (i) there are at least two coalitions in the ordered set because all the coalitions that break in a cycle are also broken; (ii) if there are only two coalitions, say $R_1$ and $R_2,$ then there is an agent $i \in R_1\cap R_2$ such that $R_1 \succ_i R_2 \succ_i R_1,$ which by transitivity implies $R_1 \succ_i R_1,$ which is a contradiction. Therefore, $(R_1, \ldots, R_\ell)$ is a ring.

\noindent $(\Longrightarrow)$ Let $(R_1,\ldots,R_J)$ be a ring in  coalition formation game $(N,\succ_N)$. Define the collection of coalition structures $\! \mathcalligra{C} \; =(\pi_1,\ldots,\pi_J)$ where $\pi_j$ is as follows:
$$\pi_j (i)=\left\{ 
\begin{tabular}{ll}
$R_j$ & $\text{for }i \in R_j $ \\
$\{i\}$ & otherwise.%
\end{tabular}%
\right.
$$
By Definition \ref{definicion de ring},  $\pi_j \gg\pi_{j-1}$ via $R_j$ for each $j=1,\ldots,J$ (subscript modulo $J$). Therefore, $\! \mathcalligra{C}\ \  $ is a cycle. \end{proof}

Next, we illustrate the above result with an example.
\medskip

\noindent \textbf{Example 1 (Continued)} \textit{Consider the ring $(15,12,23,34,45)$ in Example \ref{ejemplo ring}.  The collection $\mathcalligra{C}=\left(\{12,34,5\}, \{12,3,45\},\{1,23,45\},\{15,23,4\},\{15,2,34\}\right)$ is a cycle of coalition structures. Starting from $\{12,34,5\}$, the set of
blocking coalitions between coalition structures is $\mathcal{C}%
=(45,23,15,34,12)$. Assume that Step 1 of the above algorithm selects
coalition $45$. The following steps select coalitions $15$, $12,$ $23$ and $34$,
respectively. The algorithm ends when coalition $45$ is reached again and
ring $(45,15,12,23,34)$ is obtained.  \hfill $\Diamond$}
\medskip

Given a non-trivial absorbing set $\mathcal{A}$, by Definition \ref{absorbing}, there is a collection of cycles forming  $\mathcal{A}$. For each cycle, by Proposition \ref{Th ring cycle}, there is a collection of rings. We say that each one of these rings is \textit{derived from} $\mathcal{A}.$ Thus, by merging all the overlapping rings derived from $\mathcal{A}$, we can construct all generalized rings that belong to the reduced form associated with $\mathcal{A}.$ In the following lemmata we formalize this construction.
\begin{lemma}\label{remark relacion entre rings y generalized rings}
Let $\mathcal{A}$ be a non-trivial absorbing set and consider a maximal collection of overlapping rings derived from $\mathcal{A}$. If  $\mathcal{B}$ is the set of all coalitions that belong to such rings, then $\mathcal{B}$ is a generalized ring. 
\end{lemma}
\begin{proof}
Let $\mathcal{A}$ be a non-trivial absorbing set and consider a maximal collection of rings derived from $\mathcal{A}$ that overlaps. Let  $\mathcal{B}$ be the set of all coalitions that belong to those rings.  We now show that $\mathcal{B}$ fulfills Conditions (i) and (ii) of Definition \ref{def generalize ring}. By construction of the overlapping rings, Condition (i) is fulfilled straightforwardly. To see Condition (ii), assume that there are a coalition $\widetilde{C} \in \mathcal{B}$ and a coalition structure $\widetilde{\pi}\in \mathcal{A}$ such that $\widetilde{C}\in \widetilde{\pi}$.  There are two cases to consider:
\begin{enumerate}
\item [$\boldsymbol{1.}$]\textbf{There is $\boldsymbol{\mathcal{M} \in \mathcalligra{M}_{\mathcal{B}}$ such that $\mathcal{M}\subseteq \widetilde{\pi}}$}. Thus, by construction of $\mathcal{B}$, there is $C'\in \mathcal{B}$ with $C' \succ \widetilde{C}$ such that $C'$ breaks $\widetilde{\pi}$ and, therefore, $C'$ breaks $\mathcal{M}$. 
\item [$\boldsymbol{2.}$]\textbf{For each  $\boldsymbol{\mathcal{M} \in \mathcalligra{M}_{\mathcal{B}}, \mathcal{M}\nsubseteq \widetilde{\pi}}.$} Take the set of coalitions  $\mathcal{B}\cap \widetilde{\pi}$.  By definition of maximal set, there is $\widetilde{\mathcal{M}} \in \mathcalligra{M}_{\mathcal{B}}$ such  $\mathcal{B}\cap \widetilde{\pi} \subseteq \widetilde{\mathcal{M}}.$ Note that the agents in $N(\mathcal{B})\setminus N(\mathcal{B}\cap \widetilde{\pi})$ are singletons in $\widetilde{\pi}.$ Thus, there is a $\widehat{\pi}$ such that $\widetilde{\mathcal{M}} \subseteq \widehat{\pi}$ and $\widehat{\pi}\gg^T \widetilde{\pi}$ by forming each coalition in $\widetilde{\mathcal{M}}\setminus \widetilde{\pi}$ with agents of $N(\mathcal{B})$ that are singletons in $\widetilde{\pi}$. Now, by Case 1, there is a coalition $C'$ that breaks $\widetilde{\mathcal{M}}.$
\end{enumerate}
Since Conditions (i) and (ii) of Definition \ref{def generalize ring} are fulfilled, $\mathcal{B}$ is a generalized ring.\end{proof}

In this way, for a given non-trivial absorbing set, we can construct a collection of generalized rings. Formally, 

\begin{lemma}\label{construyo todos los RC}
A non-trivial absorbing set induces a collection of generalized rings.
\end{lemma}
\begin{proof}
Let $\mathcal{A}$ be a non-trivial absorbing set. Notice that, given any two different coalition structures in $\mathcal{A}$, by Definition \ref{absorbing} there is a cycle of coalition structures in $\mathcal{A}$ that includes those coalition structures. $\mathcal{A}$ can, therefore, be seen as the union of all such cycles. Thus, by Proposition \ref{Th ring cycle}, for each cycle of coalition structures in $\mathcal{A}$ there is a ring. Thus, by Lemma \ref{remark relacion entre rings y generalized rings}, all generalized rings induced by $\mathcal{A}$ are constructed. 
\end{proof}

\section{Proof of Theorem \ref{bijection}}\label{apendice}

This appendix proves that each absorbing set can be identified with a reduced form. To prove that each reduced form $\mathcalligra{R}\ $ generates an absorbing set, we first define a special class of coalition structures constructed from $\mathcalligra{R}\ $, that we call $\mathcalligra{R}\ $ - coalition structures (Definition \ref{definicion de D coalition structure}). We show that each $\mathcalligra{R}\ $ - coalition structure transitively dominates: (i) all other $\mathcalligra{R}\ $ - coalition structures (Lemma \ref{pisubde}), and (ii) all coalition structures that contain all non-singleton coalitions of that $\mathcalligra{R}\ $ - coalition structure (Lemma \ref{compacta dominada}). We also show that if a coalition structure transitively dominates a $\mathcalligra{R}\ $ - coalition structure, then the converse also follows (Lemma \ref{pisubde bis}).

Each $\mathcalligra{R}\ $ - coalition structure, in turn, defines a set: The set that includes it together with all coalition structures that transitively dominate it. By the previous results, we show that this set only depends on  $\mathcalligra{R}\ $ (and not on the particular $\mathcalligra{R}\ $ - coalition structure chosen to construct it).  This set is proven to be an absorbing set (Proposition \ref{absorbing and stable decomp}).

To show that each absorbing set generates a reduced form, first we observe that an absorbing set is formed by overlapping cycles of coalition structures. From the collection of all these cycles, we can construct all generalized rings derived from such absorbing set (Lemma \ref{construyo todos los RC}). Furthermore, we identify the fixed components by considering the coalitions that appear in all coalition structures of that absorbing set. Thus, we construct a collection of generalized rings, fixed components, and a set of singletons consisting of the remaining agents. This collection turns out to be a reduced form (Theorem \ref{bijection}).

Now we start by formalizing the definition of  a $\mathcalligra{R}\ $ - coalition structure.\medskip

\begin{definition}\label{definicion de D coalition structure} Let $(N,\succ_N)$ be a coalition formation game and  $\mathcalligra{R} \ $ \   a reduced form of that game. An $\mathcalligra{R}$ \ \textbf{-- coalition structure}  is  a coalition structure $\pi_{\mathcalligra{R}}$  \  such that:

\begin{enumerate}[(i)]

\item for each compact generalized ring or each fixed component $\mathcal{B} \in \mathcalligra{R} \ $, $\pi_{\mathcalligra{R}}\  \cap\mathcal{B}=\mathcal{M}$ for some $\mathcal{M} \in \mathcalligra{M} \, _\mathcal{B}$, and  $\pi_{\mathcalligra{R}}\ (i)=\{i\}$ for each $i\in N(\mathcal{B}) \setminus N(\mathcal{M}).$\footnote{Given a maximal set  $\mathcal{M}$, denote by $N(\mathcal{M})$  the set of agents that belong to (at least) one coalition in $\mathcal{M},$ that is, $N(\mathcal{M}) \equiv \bigcup_{C \in \mathcal{M}}C.$}

\item for each non-compact generalized ring $\mathcal{B} \in \mathcalligra{R} \ $, $\pi_{\mathcalligra{R}}\  \cap\mathcal{B}=C$ for some $C \in \mathcal{B}$, and  $\pi_{\mathcalligra{R}}\ (i)=\{i\}$ for each $i\in N(\mathcal{B}) \setminus C.$


\item for each $i \in \mathcal{S} $, $\pi_{\mathcalligra{R}}\ (i)=\{i\}.$

\end{enumerate}
\end{definition}\medskip

\noindent Condition (i) says that for  each compact generalized ring or each fixed component of $\mathcalligra{R} \ $, an $\mathcalligra{R}\ $ -- coalition structure includes: 
\begin{itemize}
    \item a maximal set of that generalized ring,
    \item the  agents of that generalized ring not involved in that maximal set as singletons, and
    \item the coalition of each fixed component.
 \end{itemize}
Condition (ii) says that for each non-compact generalized ring of $\mathcalligra{R}\ $, an $\mathcalligra{R}\ $ -- coalition structure includes: 
\begin{itemize}
    \item only one coalition of that generalized ring, and
    \item the agents of that generalized ring not involved in the coalition as singletons.
\end{itemize}
 Condition (iii) says that each agent in $\mathcal{S}$ is a singleton in an $\mathcalligra{R}$ \ -- coalition structure.

\begin{lemma}\label{pisubde}
Let $\mathcalligra{R}$ \  \ be a reduced form and let  $\pi_{ \!\! \mathcalligra{R}} \ $ and $\pi'_{\!\!\mathcalligra{R}} \ $ be two different $\mathcalligra{R}$ -- coalition structures. Then  $ \pi_{\!\!\mathcalligra{R}} \ \gg ^T \pi'_{\!\!\mathcalligra{R}} \ .$ 
\end{lemma}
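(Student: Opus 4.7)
The plan is to invoke Lemma \ref{compacta dominada} by producing an intermediate coalition structure $\pi^{\star}$ with $\pi^{\star}\gg^{T}\pi'_{\mathcalligra{D}}$ (or $\pi^{\star}=\pi'_{\mathcalligra{D}}$) and $\mathcal{M}(\pi_{\mathcalligra{D}})\subseteq\mathcal{M}(\pi^{\star})$. Then Lemma \ref{compacta dominada} will give either $\pi_{\mathcalligra{D}}=\pi^{\star}$ or $\pi_{\mathcalligra{D}}\gg^{T}\pi^{\star}$, and chaining with the first step will yield the desired $\pi_{\mathcalligra{D}}\gg^{T}\pi'_{\mathcalligra{D}}$; the degenerate case $\pi_{\mathcalligra{D}}=\pi'_{\mathcalligra{D}}$ is immediate.

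First I would observe that, by the definition of a $\mathcalligra{D}$--coalition structure, $\pi_{\mathcalligra{D}}$ and $\pi'_{\mathcalligra{D}}$ must coincide on every party $\mathcal{B}\in\mathcalligra{D}$ with $|\mathcal{B}|_{\mathcal{K}}\leq 1$, so they differ only through the compact-collection elements $\mathcal{E}_{t},\mathcal{E}'_{t}\in\mathcal{C}(\mathcal{B}_{t})$ chosen on the ring-component parties $\mathcal{B}_{1},\ldots,\mathcal{B}_{r}$. Since the parties partition $N$ and every coalition of $\mathcal{B}_{t}$ sits inside $N(\mathcal{B}_{t})$, any blocking coalition drawn from $\mathcal{B}_{t}$ alters only the restriction of the coalition structure to $N(\mathcal{B}_{t})$. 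Processing the ring components one at a time will therefore reduce the proof to the following local sub-claim: for any ring component $\mathcal{B}$ and any $\mathcal{E},\mathcal{E}'\in\mathcal{C}(\mathcal{B})$, starting from $\mathcal{E}'$ together with the singletons on $N(\mathcal{B})\setminus N(\mathcal{E}')$, one can reach (via $\gg^{T}$) a coalition structure whose non-single coalitions in $\mathcal{B}$ include $\mathcal{E}$, using only blocking coalitions drawn from $\mathcal{B}$.

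To prove the sub-claim I would split into two cases. If $\mathcal{B}$ is not simple, then $\mathcal{E}=\{R\}$ and $\mathcal{E}'=\{R'\}$, and property (i) of Definition \ref{defino ring component} supplies a chain $R=T_{0}\succ T_{1}\succ\cdots\succ T_{k}=R'$ inside $\mathcal{B}$. Starting from $\{T_{k}\}$ plus singletons, successively forming $T_{k-1},T_{k-2},\ldots,T_{0}=R$ is a valid sequence of $\gg$-dominations because consecutive $T_{i}$'s must intersect (otherwise $T_{i-1}\succ T_{i}$ would be vacuous), agents at those intersections strictly prefer the next coalition by the very definition of $\succ$, and any singleton joining a permissible coalition strictly improves. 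If $\mathcal{B}$ is simple, then $\mathcal{E}$ and $\mathcal{E}'$ are maximal sets; property (ii) of Definition \ref{defino ring component} supplies a coalition $R\in\mathcal{B}$ that breaks $\mathcal{E}'$, and by simplicity $R$ meets exactly one $R_{1}\in\mathcal{E}'$, so forming $R$ replaces $R_{1}$ by $R$ in the non-single coalitions. Iterating this break-and-replace step will cycle through the maximal sets of $\mathcal{B}$ and reach $\mathcal{E}$ in finitely many steps.

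The hard part will be the simple case, where I must verify (a) that after each breaking step the new collection of non-single coalitions is again a maximal set of $\mathcal{B}$, so that the iteration stays inside $\mathcal{C}(\mathcal{B})$ and is well posed, and (b) that the resulting directed graph on $\mathcal{C}(\mathcal{B})$ is strongly connected, so that $\mathcal{E}$ is reachable from an arbitrary $\mathcal{E}'$. Both facts should follow from simplicity combined with property (i) of Definition \ref{defino ring component}: the transitive preference cycles characteristic of a ring component, together with the pairwise-disjointness constraints enforced by simplicity, preclude the existence of unreachable compact-collection elements. Once the sub-claim is verified, iterating it over the ring components of $\mathcalligra{D}$ will complete the construction of $\pi^{\star}$ and therefore of the proof.
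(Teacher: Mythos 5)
Your high-level route is essentially the paper's: reduce the problem to moving, within each ring component separately, from one compact-collection element to another by a chain of dominations via ring coalitions. Your additional idea of landing merely on a structure $\pi^{\star}$ with $\mathcal{M}(\pi_{\mathcalligra{D}}\ )\subseteq\mathcal{M}(\pi^{\star})$ and finishing with Lemma \ref{compacta dominada} is legitimate (that lemma precedes this one and does not depend on it, so there is no circularity), and in fact you are attempting to supply exactly the detail that the paper's own three-line proof omits.

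The detail you supply, however, has two genuine gaps. In the non-simple case, your chain $R=T_{0}\succ T_{1}\succ\cdots\succ T_{k}=R'$ obtained from condition (i) of Definition \ref{defino ring component} is claimed to have intersecting consecutive coalitions because ``otherwise $T_{i-1}\succ T_{i}$ would be vacuous''; but vacuous satisfaction is still satisfaction: the paper defines $C\succ C'$ as $C\succ_{i}C'$ for every $i\in C\cap C'$, which holds trivially when $C\cap C'=\emptyset$, so a $\succ^{T}$-chain may well contain disjoint consecutive pairs. When $T_{i-1}$ and $T_{i}$ are disjoint, forming $T_{i-1}$ leaves $T_{i}$ intact, and a later coalition $T_{j}$ with $j<i-1$ that meets the surviving $T_{i}$ in an agent who prefers $T_{i}$ cannot be formed, so your sequence stalls. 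In the simple case you yourself flag the two facts you need --- that break-and-replace maps maximal sets to maximal sets, and that the induced graph on $\mathcal{C}(\mathcal{B})$ is strongly connected --- but you only assert that they ``should follow'' from simplicity and condition (i). The first can fail as stated: after replacing $R_{1}$ by $R$, the resulting disjoint family need not satisfy the maximality requirement without forming further coalitions. The second is the real content of the lemma, essentially the claim that a ring component generates one connected cycle of coalition structures rather than several; it is precisely the step that cannot be waved away. As it stands, your argument is a plausible plan whose two hardest steps are unproved, one of them resting on a false auxiliary claim about intersections.
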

\begin{proof}
Let $\mathcalligra{R}$ \  \ be a reduced form and let $\pi_{ \!\! \mathcalligra{R}} \ $ and $\pi'_{\!\!\mathcalligra{R}} \ $ be two different $\mathcalligra{R}$ -- coalition structures.  By Definition \ref{def generalize ring}, there are a sequence of coalitions $C_1,\ldots,C_J$  and a sequence of coalition structures $\pi_0,\ldots,\pi_J$  such that:
\begin{enumerate}[(i)]
\item $\pi_0=\pi'_{\!\!\mathcalligra{R}} \ $ and $\pi_J = \pi_{\mathcalligra{R}}$\ ,
\item $C_j\in \mathcal{B}$ for some $\mathcal{B} \in \mathcalligra{R}\ \ $ and $C_{j}\cap C\neq \emptyset$ for some $C \in \mathcal{K}\cap \pi_{j-1}$ for $j=1,\ldots,J$, \item $\pi_j \gg \pi_{j-1}$ via $C_j$ for each $j=1,\ldots,J.$
\end{enumerate}    
Therefore,  $\pi_{\!\!\mathcalligra{R}} \ \gg^T \pi'_{\mathcalligra{R}}$ \ . 
\end{proof}

The domination relation between coalition structures when a non-compact generalized ring is involved has a particular feature: When a maximal set of a non-compact generalized ring is included in a coalition structure, there is another coalition structure that contains only one coalition of that generalized ring that transitively dominates only one coalition in the maximal set. For instance, consider Example \ref{ejemplo tipo 2}, which only has one non-compact generalized ring. In this example $\{145,23,6,78\} \gg \{1,23,46,5,78\}$ via $145$, $\{1,2,356,4,78\} \gg \{145,23,6,78\}$ via $356$, and $\{1,2,3,46,$ $5,78\} \gg \{1,2,356,4,78\}$ via $46$. Therefore,  $\{1,2,3,46,5,78\} \gg^T \{1,23,46,5,78\}.$ This ``disintegrating'' behavior of maximal sets is always present when the generalized rings are non-compact. The following lemma deals with this fact and is used to prove Lemma \ref{compacta dominada}.
\begin{lemma}\label{reducir a compact collection}
Let $\mathcal{B}$ be a non-compact generalized ring,  $\mathcal{M} \in \mathcalligra{M}_{\mathcal{B}}$, and  let $\pi \in \Pi$  be such that $\mathcal{M}\subseteq \pi
$. 
 If  $C$ is any coalition in $\mathcal{B}$,  and $\pi^{\star} \in \Pi$ is such that
$$\pi^{\star} (i)=\left\{ 
\begin{tabular}{ll}
$C$ & $\text{for } i\in C$ \\
$\pi(i)$ & $\text{for }i \in N \setminus  N(\mathcal{B})$\\
$\{i\}$ & otherwise, \\%
\end{tabular}%
\right.
$$
 then $\pi^{\star} \gg^{T} \pi.$

\end{lemma}
\begin{proof}
Since $\mathcal{B}$ is non-compact, there are $\widetilde{C} \in \mathcal{B}$ and  $\widetilde{\mathcal{M}}\in \mathcalligra{M}_{\mathcal{B}}$ such that $\widetilde{C}$ has non-empty intersection with  at least two coalitions of $\widetilde{\mathcal{M}}$.
Let $\widetilde{\pi}\in \Pi$ be such that 
$$\widetilde{\pi}(i)=\left\{ 
\begin{tabular}{ll}
$\widetilde{\mathcal{M}}(i)$ & $\text{for } i\in N(\widetilde{\mathcal{M}})$ \\
$\pi(i)$ & $\text{for }i \in N \setminus  N(\mathcal{B})$\\
$\{i\}$ & otherwise. \\%
\end{tabular}%
\right.
$$
Thus, by Definition \ref{def generalize ring}, we have that $\widetilde{\pi} \gg^{T} \pi.$

Assume that  $\widetilde{C}$  has non-empty intersection with  each coalition in $\widetilde{\mathcal{M}}$. Let  $\pi'\in \Pi$ be such that
$$\pi'(i)=\left\{ 
\begin{tabular}{ll}
$\widetilde{C}$ & $\text{for } i\in \widetilde{C}$\\
$\pi(i)$ & $\text{for }i \in N \setminus  N(\mathcal{B})$\\
$\{i\}$ & otherwise. \\%
\end{tabular}%
\right.
$$
Thus,  $\pi' \gg \widetilde{\pi}$ via $\widetilde{C}$. By Definition \ref{def generalize ring}, $\pi^{\star} \gg^T \pi'$. Therefore, $\pi^{\star} \gg^T \pi'\gg \widetilde{\pi}\gg^T \pi$ implying $\pi^{\star} \gg^T  \pi$, and the proof is complete. 

Assume now that $\widetilde{C}$  has a non-empty intersection with  $k$ coalitions of $\widetilde{\mathcal{M}}$ with $k<|\widetilde{\mathcal{M}}|$. Let $\mathcal{E}\subseteq \widetilde{\mathcal{M}}$ be such that each coalition in $\mathcal{E}$ is disjoint with $\widetilde{C}$. Thus, $|\mathcal{E}|\geq 1.$ Let $\pi_{\mathcal{E}}\in \Pi$ be such that 
$$\pi_{\mathcal{E}}(i)=\left\{ 
\begin{tabular}{ll}
$\widetilde{C}$ & $\text{for } i\in \widetilde{C}$\\
$\mathcal{E}(i)$ & $\text{for } i\in N(\mathcal{E})$ \\
$\pi(i)$ & $\text{for }i \in N \setminus  N(\mathcal{B})$\\
$\{i\}$ & otherwise. \\%
\end{tabular}%
\right.
$$
Thus, $\pi_{\mathcal{E}} \gg \widetilde{\pi}$ via $\widetilde{C}.$ 
Now, it is possible to construct a sequence $C_1,\ldots,C_m$ of coalitions of $\mathcal{B}$, a sequence $\mathcal{E}_0,\ldots,\mathcal{E}_m$  of subsets of the maximal sets of  $\mathcal{B}$ such that $\mathcal{E}_0 =\mathcal{E}\cup \{\widetilde{C}\}$ and  $\mathcal{E}_m \subsetneq\widetilde{\mathcal{M}} $ with $\widetilde{C}$  having a non-empty intersection with  at least two coalitions in $\mathcal{E}_m$, and a sequence $\pi_{\mathcal{E}}=\pi_0,\ldots,\pi_m$ of coalition structures   fulfilling the following conditions for each $\ell=1,\ldots,m$:
\begin{enumerate}[(i)]
\item $C_{\ell} \in \mathcal{E}_{\ell} \subseteq \pi_{\ell},$
\item $\pi_\ell \gg \pi_{\ell-1}$ via $C_\ell$,
\item $|\mathcal{E}_\ell|=|\mathcal{E}|+1.$
\end{enumerate}
Thus, $\pi_m \gg^T \pi_{\mathcal{E}}.$ Let  $\mathcal{E}'\subsetneq \mathcal{E}_m $  be such that each coalition in $\mathcal{E}$ is disjoint with $\widetilde{C}$. Let $\widehat{\pi}$ be such that 
$$\widehat{\pi}(i)=\left\{ 
\begin{tabular}{ll}
$\widetilde{C}$ & $\text{for } i\in \widetilde{C}$\\
$\mathcal{E}'(i)$ & $\text{for } i\in N(\mathcal{E}')$ \\
$\pi(i)$ & $\text{for }i \in N \setminus  N(\mathcal{B})$\\
$\{i\}$ & otherwise. \\%
\end{tabular}%
\right.
$$
Thus, $\widehat{\pi} \gg \pi_m$ via $\widetilde{C}$. If $\mathcal{E}'=\emptyset$, then $\widehat{\pi}=\pi'$. By Definition \ref{def generalize ring}, $\pi^{\star} \gg^T \pi'$, so $\pi^{\star} \gg^T \pi'\gg \widetilde{\pi}\gg^T \pi$ implying $\pi^{\star} \gg^T \pi$, and the proof is complete.   If $\mathcal{E}'\neq\emptyset$, then  $\pi_{\mathcal{E}'}$ is defined similarly to $\pi_{\mathcal{E}}$ and  the same reasoning used when $\pi_{\mathcal{E}}$ was defined can be repeated. Eventually, we reach a coalition structure $\widehat{\pi}$  such that the only coalitions in common with $\widetilde{\mathcal{M}}$  are those that have non-empty intersection with $\widetilde{C}$. Then, such coalition structure is dominated by $\pi'$ via $\widetilde{C}$.
Thus, $\pi'\gg \widehat{\pi} \gg^T \pi_{\mathcal{E}}\gg \widetilde{\pi} \gg^T \pi.$ By Definition \ref{def generalize ring}, $\pi^{\star} \gg^T \pi'$. Therefore, $\pi^{\star} \gg^{T}\pi.$
\end{proof}
\begin{lemma}\label{compacta dominada} Let $\mathcalligra{R} \ $ \ be a reduced form and let  $\pi_{\mathcalligra{R}}$ \  be an $\mathcalligra{R}$ \ -- coalition structure. If $\pi\in\Pi$ is such that each non-singleton coalition in $\pi_{\mathcalligra{R}}$ \ belongs to  $\pi$,  then $\pi = \pi_{\mathcalligra{R}} \ $ or  $\pi_{\mathcalligra{R}} \gg^T \pi.$  
\end{lemma}
\begin{proof}
Let $\mathcalligra{R} \ $ \ be a reduced form and let $\pi_{\mathcalligra{R}}$ \  be an $\mathcalligra{R}$ \ -- coalition structure. Let  be $\mathcal{M}$  the set of all non-singleton coalitions in $\pi_{\mathcalligra{R}}$ \ and consider $\pi\in\Pi$  such that $\mathcal{M}\subseteq \pi$. 
There are two cases to consider:
\begin{enumerate}

\item [$\boldsymbol{1.}$] \textbf{Each non-singleton coalition $\boldsymbol{C \in \pi \setminus \mathcal{M}}$ belongs to a generalized ring or fixed component of $\boldsymbol{\mathcalligra{R}}$\ }. If $\mathcalligra{R}$ \  only includes generalized rings of type 1 or fixed components then, by Definition \ref{definicion de D coalition structure}, $\pi = \pi_{\mathcalligra{R}} \ $ and the proof is complete. Assume that there is only one non-compact generalized ring, say $\mathcal{B}$,  in $\mathcalligra{R}$\ . Thus, by Lemma  \ref{reducir a compact collection}, there are $C'\in \mathcal{B}$ and $\pi^{\star}\in \Pi$ with 
$$\pi^{\star} (i)=\left\{ 
\begin{tabular}{ll}
$C'$ & $\text{for } i\in C'$ \\
$\pi(i)$ & $\text{for }i \in N \setminus  N(\mathcal{B})$\\
$\{i\}$ & otherwise \\%
\end{tabular}%
\right.
$$
 such that $\pi^{\star}\gg^T \pi$. By Definition \ref{definicion de D coalition structure}, $ \pi^{\star}$  is an $\mathcalligra{R} \ $--coalition structure. Thus, by Lemma \ref{pisubde}, $\pi_{\mathcalligra{R}} \ \gg^T \pi^{\star}$ and the proof is complete. Assume now that there is more than one non-compact generalized ring. Applying Lemma \ref{reducir a compact collection} for each of these generalized rings, the proof follows similarly.

\item [$\boldsymbol{2.}$]\textbf{There is a non-singleton coalition $\boldsymbol{C \in \pi \setminus \mathcal{M}$ such that $C}$ does not  belong to } \textbf{any generalized ring or any fixed component of $\boldsymbol{\mathcalligra{R}}\ .$  } Thus, either $C$ belongs to a generalized ring or  it defines a fixed component (formed by agents in $\mathcal{S}$), that we denote by  $\widetilde{\mathcal{B}}$. In either case, by Definition \ref{reduced form}, $\widetilde{\mathcal{B}}$ is not protected by  $\mathcalligra{R}\ $. Next define $\pi^{\star} \in \Pi$ as follows.
If there is no non-compact generalized ring in $\mathcalligra{R}\ $, $\pi^{\star}=\pi$. If there are non-compact generalized rings in $\mathcalligra{R}\ $,  using Lemma \ref{reducir a compact collection} repeatedly we can construct $\pi^{\star} \in \Pi$ such that it contains only one coalition of each non-compact generalized ring, $\pi^{\star}(i)=\pi(i)$ for each agent $i$ not in any non-compact generalized ring, all remaining agents are singletons in $\pi^{\star}$, and $\pi^{\star} \gg^T \pi.$

Given a coalition structure $\pi \in \Pi$, let $|\pi|_\mathcal{K}$ denote the number of  coalitions $C\in \pi$ with $|C|>1.$ 

\noindent \textbf{Claim: there is a coalition structure $\boldsymbol{\widetilde{\pi}$ with set of non-singleton coalitions $\mathcal{M}''}$ } \textbf{   such that  $\boldsymbol{\mathcal{M} \subseteq \mathcal{M}''}$,}  $\boldsymbol{\widetilde{\pi} \gg^T \pi^{\star}}$, \textbf{and}  $\boldsymbol{|\widetilde{\pi}|_\mathcal{K} < |\pi^{\star}|_\mathcal{K}.}$

\noindent To prove the Claim, note that in $\pi^{\star}$ there is a non-singleton coalition with a non-empty intersection with agents in $N(\widetilde{\mathcal{B}}),$ say $C_0$. First, assume that  $C_0\subseteq N(\widetilde{\mathcal{B}})$. Thus, $C_0$ defines a fixed component or belongs to a generalized ring,  and Definition \ref{reduced form}  (ii) implies the existence of a coalition $C_1$ that breaks the fixed component or generalized ring to which $C_0$ belongs and nothing impedes $C_1$ from being formed.
If there is no fixed component and no generalized ring $\mathcal{B}\in \!\mathcalligra{R}\ $ such that   $C_1\cap N(\mathcal{B})\neq \emptyset$, then $C_1$ defines a fixed component or belongs to a generalized ring formed by agents in $N(\widetilde{\mathcal{B}}).$  In either case, Definition \ref{reduced form} implies the existence of a coalition $C_2$ that breaks the fixed component or the generalized ring to which $C_1$ belongs and nothing impedes  $C_2$ from being formed. If there is neither a fixed component nor a generalized ring  $\mathcal{B}\in \!\mathcalligra{R}\ $ such that   $C_2\cap N(\mathcal{B})\neq \emptyset$, repeat the previous argument. Continuing this reasoning,  it is possible to construct a sequence of coalitions $C_0, C_1,\ldots,C_m$ and a sequence of coalition structures $\pi_0, \pi_1,\ldots,\pi_{m-1}$ with $\pi_0=\pi^{\star}$  fulfilling the following conditions:
\begin{enumerate}[(i)]
\item $C_\ell \succ C_{\ell-1}$  for each $\ell=1,\ldots,m$;
\item $C_{\ell-1}\subseteq N(\widetilde{\mathcal{B}})$ and $\pi_\ell \gg \pi_{\ell-1}$ via $C_\ell$  for each $\ell=1,\ldots,m-1$; and

\item there is a fixed component or a generalized ring $\mathcal{B}'\in \!\mathcalligra{R}\ $ such that  $C_m \cap N(\mathcal{B}')\neq \emptyset.$
\end{enumerate}  Note that, by Conditions (i) and (ii) above, $C_m\cap N(\widetilde{\mathcal{B}})\neq \emptyset.$ Also, the existence of a fixed component or a generalized ring  $\mathcal{B}'$ in (iii) is ensured by the finiteness of the number of coalitions of the game and by Definition \ref{reduced form}. Given that $\mathcal{B}'\in \!\mathcalligra{R}\ $ is a generalized ring\footnote{ Otherwise, since $\mathcal{B}'$ does not impedes $C_m$ from being formed, $C_m$ would break $\mathcal{B}'.$ This contradicts  Definition \ref{reduced form}.} and does not impede $C_m$ from being formed, there is a maximal set of $\mathcal{B}'$ (when $\mathcal{B}'$ is of type 1) or a coalition in $\mathcal{B}'$ (when $\mathcal{B}'$ is non-compact) that has empty intersection with $C_m$. Use the term $\mathcal{E}$ for that maximal set (when $\mathcal{B}'$ is of type 1) or for the singleton that includes that coalition  (when $\mathcal{B}'$ is non-compact). 
By  Definition \ref{def generalize ring}, there is a coalition structure $\pi'$ such that $\pi'\gg^T\pi_{m-1}$ with $\pi'(i)=\pi_{m-1}(i)$ for each $i\in N \setminus N(\mathcal{B}')$, and either $\pi'(i)=\{i\}$ or $\pi'(i)\subseteq\mathcal{E}$ for each $i\in N(\mathcal{B}')$. Thus, there is $\pi_m$ such that $\pi_m \gg \pi'$ via $C_m.$ 
Given that $\mathcal{B}'$ is a generalized ring, $C_m\cap N(\mathcal{B}')\neq\emptyset$, and $\mathcal{B}'$ does not impede $C_m$ from being formed, there are $\mathcal{E}'$ (a maximal set when $\mathcal{B}'$ is of type 1 or a singleton that includes a coalition when $\mathcal{B}'$ is non-compact), a coalition $R_0\in \mathcal{E}'$ such that $R_0$ breaks $\mathcal{E}$ and  $R_0\succ C_m$, and a coalition structure $\pi''$ such that  $\pi''\gg\pi_m$ via $R_0$.\footnote{If $\mathcal{E}$ is the unique set such that $C_m\cap N(\mathcal{E})=\emptyset$, the existence of $R_0$ such that $R_0\succ C_m$ is guaranteed.  If it is not unique, it can be selected such that $\pi''\gg \pi_m$ via $R_0$.} 
Given that $\mathcal{E}'$ may not be  included in  $\pi^{\star}$, and $\mathcal{B}'$ is a generalized ring, there are a sequence of coalitions  $R_1,\ldots,R_s$ in  $\mathcal{B}',$ and a sequence of coalition structures $\widetilde{\pi}_0, \widetilde{\pi}_1,\ldots,\widetilde{\pi}_s$ with $\widetilde{\pi}_0=\pi''$  fulfilling the following conditions for each $\ell=1,\ldots,s$:
\begin{enumerate}[(i)]
\item $R_\ell \succ R_{\ell-1}$,
\item $\widetilde{\pi}_\ell \gg \widetilde{\pi}_{\ell-1}$ via $R_\ell$, and  
\item $\widetilde{\pi}_s(i)= \pi^{\star}(i)$ for each $i\in N(\mathcal{B}')$.
\end{enumerate}
Let  $\widetilde{\pi} \equiv \widetilde{\pi}_s$. Note that, since $C_m \in \pi_m$,  $R_0\succ C_m$, $R_0$ breaks $\mathcal{E}$, and $\mathcal{E}\subseteq \pi_m$, it follows that   $|\pi_m|_{\mathcal{K}}>| \widetilde{\pi}|_{\mathcal{K}}.$  Also, by construction of the sequence $\pi_0,\ldots,\pi_m$, $|\pi^{\star}|_{\mathcal{K}} \geq | \pi_m|_{\mathcal{K}}$. Therefore, $|\pi^{\star}|_{\mathcal{K}} > | \widetilde{\pi}|_{\mathcal{K}}.$ Moreover, by Condition (iii) verified by the sequence $\widetilde{\pi}_0, \widetilde{\pi}_1,\ldots,\widetilde{\pi}_s$, we have that $\mathcal{M} \subseteq \mathcal{M}''$. Thus, $\widetilde{\pi}$ fulfills the conditions of the Claim and the Claim holds when $C_0\subseteq \widetilde{\mathcal{B}}$. Second, assume that $C_0\cap \left(N \setminus N(\widetilde{\mathcal{B}})\right)\neq \emptyset$. Thus,  $C_0$ can be considered as the coalition $C_m$ in the previous reasoning, and the proof follows similarly.  This completes the proof of the Claim.

Now, we conclude the proof of Lemma \ref{compacta dominada}. First note that, by the Claim, a coalition structure $\widetilde{\pi}$ is obtained such that $\widetilde{\pi}\gg^T \pi^{\star},$   $\mathcal{M} \subseteq \mathcal{M}''$,  and $|\widetilde{\pi}|_\mathcal{K} < |\pi^{\star}|_\mathcal{K}.$ If $\widetilde{\pi}=\pi_{\mathcalligra{R}} \ $, then  the proof is complete. Otherwise, there is $C' \in \mathcal{M}'' \setminus \mathcal{M}$ such that $C'$ does not  belong to any fixed component or generalized ring of $\mathcalligra{R}\ .$ By applying the Claim to coalition structure $\widetilde{\pi}$ we obtain a new coalition structure $\widehat{\pi}$ such that $\widehat{\pi} \gg^T \widetilde{\pi},$ $\mathcal{M} \subseteq \mathcal{M}''$  and $|\widehat{\pi}|_\mathcal{K} < |\widetilde{\pi}|_\mathcal{K} < |\pi^{\star}|_\mathcal{K} .$  If $\widehat{\pi}=\pi_{\mathcalligra{R}} \ $, then  the proof is completed. Otherwise, continue applying the Claim until coalition structure $\pi_{\mathcalligra{R}} \ $ is obtained.
\end{enumerate}\end{proof}

\begin{lemma}\label{pisubde bis} 
Let $\mathcalligra{R}$ \  \ be a reduced form, let $\pi_{ \!\! \mathcalligra{R}} \ $ be an $\mathcalligra{R}$ -- coalition structure and let $\pi\in \Pi$. If $\pi \gg^T \pi_{ \!\! \mathcalligra{R}} \ $, then $\pi_{ \!\! \mathcalligra{R}} \ \gg^T \pi.$
\end{lemma}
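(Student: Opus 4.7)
My plan combines Lemmata \ref{pisubde} and \ref{compacta dominada} with a constructive extension of $\pi$ upward via the dominance relation. By Lemma \ref{pisubde}, any two $\mathcalligra{D}$-coalition structures transitively dominate each other, so it suffices to exhibit some $\mathcalligra{D}$-coalition structure $\pi'_{\!\!\mathcalligra{D}}$ with $\pi'_{\!\!\mathcalligra{D}} \gg^T \pi$; then transitivity yields $\pi_{\!\!\mathcalligra{D}} \gg^T \pi'_{\!\!\mathcalligra{D}} \gg^T \pi$, as desired.

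In turn, by Lemma \ref{compacta dominada}, it is enough to produce a coalition structure $\sigma$ with $\sigma = \pi$ or $\sigma \gg^T \pi$, together with a $\mathcalligra{D}$-coalition structure $\pi'_{\!\!\mathcalligra{D}}$ satisfying $\mathcal{M}(\pi'_{\!\!\mathcalligra{D}}) \subseteq \mathcal{M}(\sigma)$. Indeed, Lemma \ref{compacta dominada} then gives $\pi'_{\!\!\mathcalligra{D}} \gg^T \sigma$, and transitivity delivers $\pi'_{\!\!\mathcalligra{D}} \gg^T \pi$.

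To build the pair $(\sigma, \pi'_{\!\!\mathcalligra{D}})$, I would extend $\pi$ upward by a finite chain of blockings that progressively installs the non-single coalitions of some target $\pi'_{\!\!\mathcalligra{D}}$. For each ring component $\mathcal{B} \in \mathcalligra{D}$, I would employ the cyclic rotation mechanism developed in Lemma \ref{reducir a compact collection} to reach a configuration whose restriction to $N(\mathcal{B})$ matches an appropriate set $\mathcal{E} \in \mathcal{C}(\mathcal{B})$. For each single-coalition party $\{C'\} \in \mathcalligra{D}$, I would install $C'$ via a blocking move; its availability relies on the protection property of $\mathcalligra{D}$ (Definition \ref{stable decomposition}), since any external coalition formed along the chain from $\pi_{\!\!\mathcalligra{D}}$ to $\pi$ is not itself protected by $\mathcalligra{D}$ and so cannot sustainably obstruct $C'$.

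The main obstacle is reconciling simultaneous installations across different parties: a blocking that installs one target coalition may dismantle another previously installed one. I anticipate handling this with an inductive argument on the number of target coalitions not yet installed, combined with an auxiliary progress measure (for instance the sum over ring components of the rotation distance to the chosen target set), and invoking the protection property of $\mathcalligra{D}$ to rule out sustainable obstruction by external coalitions. Termination then produces the desired $\sigma$ and closes the proof.
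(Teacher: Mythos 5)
Your reduction steps are exactly the paper's: by Lemma \ref{pisubde} it suffices to exhibit one $\mathcalligra{D}$\ -- coalition structure that transitively dominates $\pi$, and Lemma \ref{compacta dominada} is the tool that produces it. Where you part ways with the paper is in how the hypothesis of Lemma \ref{compacta dominada} is met. The paper does not extend $\pi$ upward at all: it asserts that $\pi$ \emph{itself} already satisfies $\mathcal{M}(\pi'_{\mathcalligra{D}} \ ) \subseteq \mathcal{M}(\pi)$ for some $\mathcalligra{D}$\ -- coalition structure $\pi'_{\mathcalligra{D}} \ $, and applies Lemma \ref{compacta dominada} directly to $\pi$. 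The reason this holds is an invariant along the domination chain from $\pi_{\mathcalligra{D}} \ $ to $\pi$: the protection conditions in Definition \ref{stable decomposition} guarantee that no blocking coalition can ever dislodge \emph{all} of the coalitions of a single-coalition party, nor all of a currently present member of a ring component's compact collection, because whichever set is present at that moment supplies an agent who strictly prefers to stay and thus defeats the block. Hence the non-single part of some $\mathcalligra{D}$\ -- coalition structure survives every step of the chain, and nothing ever needs to be ``reinstalled'' at the end.

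This matters because the step you leave open --- building $\sigma \gg^T \pi$ by installing the target coalitions, reconciling installations across parties with a progress measure, and arguing that external coalitions ``cannot sustainably obstruct'' --- is precisely the part of your plan that is not yet an argument. In particular, the claim that a coalition formed along the chain cannot obstruct the installation of a party coalition $C'$ requires the same dynamic-versus-static distinction on which the whole lemma turns: protection is a statement about preferences, but whether the protecting agent actually vetoes a block depends on which coalition he currently occupies. Invoked at the installation stage, this comes close to assuming a form of the conclusion. The repair is to discard the installation phase entirely and prove the preservation invariant above by induction along the chain from $\pi_{\mathcalligra{D}} \ $ to $\pi$; that is how the paper (quite tersely, with a one-line appeal to the definition of stable decomposition) closes the argument before handing off to Lemmata \ref{compacta dominada} and \ref{pisubde}.
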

\begin{proof}
Let $\mathcalligra{R}$ \  \ be a reduced form, let $\pi_{ \!\! \mathcalligra{R}} \ $ be an $\mathcalligra{R}$ -- coalition structure and let $\pi\in \Pi$ such that $\pi \gg^T \pi_{ \!\! \mathcalligra{R}} \ $. By Definition \ref{reduced form}, there is an $\mathcalligra{R}$ -- coalition structure $\pi'_{ \!\! \mathcalligra{R}} \ $ with  set of non-singleton coalitions  $\mathcal{M}'$ such that $\mathcal{M}' \subseteq \pi.$ Thus, by Lemma \ref{compacta dominada}, $\pi'_{\!\!\mathcalligra{R}} \ =\pi$ or  $\pi'_{\! \!\mathcalligra{R}}\  \gg^T \pi.$ Then, by Lemma \ref{pisubde}, $\pi_{\!\!\mathcalligra{R}} \ \gg^T \pi'_{\mathcalligra{R}}$ \ and, therefore, $\pi_{\!\!\mathcalligra{R}} \ \gg^T \pi.$
\end{proof}
\bigskip

 Given  a reduced form $\mathcalligra{R}$ \ \ and an $\mathcalligra{R}\ $ -- coalition structure $\pi_{\mathcalligra{R}} \ ,$  the \textit{set generated by} $\pi_{\mathcalligra{R}} \ ,$ denoted by $\mathcal{A}_{\!\mathcalligra{R}}$\ ,   is the set formed by  $\pi_{\mathcalligra{R}} \ $ together with all the coalition structures that transitively dominate it. Formally,  $$\mathcal{A}_{\!\mathcalligra{R}} \equiv \{\pi_{\mathcalligra{R}} \ \}\cup \{\pi \in \Pi : \pi \gg^T \pi_{\mathcalligra{R}} \ \}.$$ 
The following result states that the set $\mathcal{A}_{\!\mathcalligra{R}} \ $ is actually an absorbing set.
 
\begin{proposition}\label{absorbing and stable decomp}
If $\mathcalligra{R}$ \  \ is a reduced form, then $\mathcal{A}_{\!\mathcalligra{R}}$ \ \ is an absorbing set. Furthermore, if there is a (compact or non-compact) generalized ring in $\mathcalligra{R}$  ,  $\mathcal{A}_{\!\mathcalligra{R}}$ \ \ is a non-trivial absorbing set.
\end{proposition}
\begin{proof}
Let $\mathcalligra{R}\ =\{\mathcal{B}_1, \ldots, \mathcal{B}_L,\mathcal{S}\}$ \   be a reduced form. 
First, assume that $\mathcal{B}_\ell=\{C_\ell\}$ for each $\ell=1,\ldots,L.$  Thus, $\pi_{\mathcalligra{R}} \  \ $ is a stable coalition structure. If not, there is a coalition that blocks the coalition structure $\pi_{\mathcalligra{R}} \  \ $. Since $\mathcalligra{R}\ $ is a  reduced form, such a blocking coalition must be formed by agents in $\mathcal{S}$, and it must belong to a fixed component or to a generalized ring, say $\mathcal{B}'$, that is not protected by $\mathcalligra{R}\ $. Therefore,   there is another blocking coalition formed by agents in $\mathcal{S}$ that blocks $\mathcal{B}'$. Repeating this reasoning and by the finiteness of the set of permissible coalitions, we eventually reach a fixed component or a generalized ring formed by agents in $\mathcal{S}$ which is not blocked, generating a contradiction since $\mathcalligra{R}\ $ is a reduced form. Hence, $\pi_{\mathcalligra{R}} \ \ $ is stable   and, by Remark \ref{remarkabsorbing} (i), $\mathcal{A}_{\!\mathcalligra{R}} \ =\{\pi_{\mathcalligra{R}} \ \}$ is an absorbing set. Second, assume  there is $\mathcal{B} \in \!\mathcalligra{R} \ \ $ such that  $|\mathcal{B} |\geq 3.$ Thus, $\mathcal{B}$ is a generalized ring of $\!\mathcalligra{R} \ $. By Definition \ref{def generalize ring}, Definition \ref{reduced form}, and Definition \ref{definicion de D coalition structure}, there are at least three different $\mathcalligra{R}\ $ -- coalition structures.  Thus, by Lemma \ref{pisubde},  $|\mathcal{A}_{\!\mathcalligra{R}} \ |\geq 3.$ Let $\pi \in \mathcal{A}_{\!\mathcalligra{R}}$ \ and consider $\pi' \in \Pi \setminus\{\pi\}$ such that $\pi' \gg^T \pi.$ As $\pi \in \mathcal{A}_{\!\mathcalligra{R}} \ ,$ it follows that $\pi \gg^T \pi_{\mathcalligra{R}}\ .$ Therefore, by transitivity of $\gg^T$, $\pi' \gg^T \pi_{\mathcalligra{R}} \ $ and $\pi' \in \mathcal{A}_{\!\mathcalligra{R}} \ .$ Next, let $\pi$ and $\pi'$ be different coalition structures of  $\mathcal{A}_{\!\mathcalligra{R}} \ .$ By definition of $\mathcal{A}_{\!\mathcalligra{R}} \ ,$ it follows that  $\pi \gg ^T \pi_{\mathcalligra{R}}$ \ and $\pi' \gg ^T \pi_{\mathcalligra{R}} \ .$   By Lemma \ref{pisubde bis}, it follows that $ \pi_{\mathcalligra{R}} \ \gg ^T \pi.$ Thus, $\pi' \gg ^T \pi_{\mathcalligra{R}} \ $ together with $ \pi_{\mathcalligra{R}} \ \gg ^T \pi$ and the transitivity of $\gg^T$ imply $\pi' \gg^T \pi.$  This proves that $\mathcal{A}_{\!\mathcalligra{R}} \ $ satisfies the conditions of Definition \ref{absorbing}. Therefore, $\mathcal{A}_{\!\mathcalligra{R}} \ $ is a non-trivial absorbing set.
\end{proof}

\medskip

 Recall that, by the definition of a compact generalized ring, each coalition that breaks a maximal set has a non-empty intersection with only one coalition of that maximal set. Thus,  when a reduced form includes a compact generalized ring, each coalition structure of the absorbing set generated includes a maximal set of that generalized ring. By contrast, when a reduced form includes a non-compact generalized ring there are coalition structures in the generated absorbing set that contain only one coalition of such generalized ring (and the remaining agents involved in the said generalized ring appear as singletons in those coalition structures). 

Note that, by Lemma \ref{pisubde}, the absorbing set $\mathcal{A}_{\!\mathcalligra{R}} \ $ depends only on the reduced form $\mathcalligra{R}\ $, and not on the specific $\mathcalligra{R}$ -- coalition structure selected to construct it.

Finally, we are in a position to prove the main result of the paper. \bigskip

\noindent\begin{proof}[Proof of Theorem \ref{bijection}]To prove that there is a bijection between  reduced forms and absorbing sets for each coalition formation game, we show that  $\mathcal{A}$ is an absorbing set if and only if  $\mathcal{A}=\mathcal{A}_{\mathcalligra{R}}$ \ \ for a reduced form $\mathcalligra{R} \ .$ The proof  ($\Longleftarrow$)  follows from Proposition \ref{absorbing and stable decomp}. To prove ($\Longrightarrow$), let $\mathcal{A}$ be an absorbing set. If $|\mathcal{A}|=1,$ by Remark \ref{remarkabsorbing} (ii), the unique element of $\mathcal{A}$ is a stable coalition structure. Thus, by Remark \ref{remark reduced form con stable partition},  a reduced form can be induced. If $|\mathcal{A}|>1,$ the reduced form $\mathcalligra{R} \ $ is constructed as follows. First, by Lemma \ref{construyo todos los RC} of Appendix \ref{section ring en preferences},   each of the generalized rings involved in $\mathcal{A}$ can be identified. Let $\mathcalligra{B} \ \ $ be the collection of all those generalized rings. Next, consider the subcollection $$\mathcalligra{B} \  \ ^\star =\{\mathcal{B} \in \mathcalligra{B} \ \ : \text{ for each } \pi \in  \mathcal{A} \text{ there is }C \in \mathcal{B} \text{ such that }C \in \pi\}$$
and include each generalized ring of $\mathcalligra{B} \  ^\star$ as an element of $\mathcalligra{R} \ .$ Second, let $\mathcal{F}=\{C \in \mathcal{K} : C \in \pi \text{  for each  }\pi \in \mathcal{A}\}$ and for each $C \in \mathcal{F}$ include $\{C\}$ as an element of $\mathcalligra{R} \ .$  Now, it remains to be shown that $\mathcalligra{R} \ $ is a reduced form.  To do this, take any $\mathcal{B}\in\!\!\mathcalligra{R} \  $.  The goal is to confirm that $\mathcal{B}$ is protected by $\!\!\mathcalligra{R} \  $.  Assume otherwise. Thus, there is $C \in \mathcal{K}$ that breaks $\mathcal{B}$ and no other element of $\!\!\mathcalligra{R} \ \ $  impedes $C$ from being  formed. This implies that there are $\pi, \pi' \in \mathcal{A}$ and $C' \in \mathcal{B}$ such that $\pi' \gg \pi$ via $C,$ $C' \in \pi$ and $C \succ C'.$ By the definition of absorbing set it also follows that $\pi \gg^T \pi'.$  Thus, there is a cycle $\mathcal{C}$ of coalition structures that contains $\pi$ and $\pi'.$  By Proposition \ref{Th ring cycle} in Appendix \ref{section ring en preferences}, cycle $\mathcal{C}$ induces a ring of coalitions that contains both coalitions $C'$ and $C.$ Therefore, $C$ belongs to generalized ring $ \mathcal{B},$ which is absurd since $C$ breaks $\mathcal{B}$. Hence, $\mathcal{B}$ is protected by  $\!\!\mathcalligra{R} \ \ $  and Condition (i) of Definition \ref{reduced form} holds. To see that Condition (ii) of Definition \ref{reduced form} holds, let  $\mathcal{B}$ be a generalized ring or a fixed component formed by agents in $\mathcal{S}$.  We need to show that $\mathcal{B}$ is not protected by $\mathcalligra{R} \ .$  Assume otherwise. There are two cases to consider:
\begin{enumerate}
\item [$\boldsymbol{1.}$] \textbf{No coalition breaks $\boldsymbol{{\mathcal{B}}}$}. Thus, by construction of $\!\!\mathcalligra{R} \ $, $\mathcal{B}$   belongs either to $\mathcalligra{B} \  \ ^\star$ or to $\mathcal{F}$, contradicting the requirement that $\mathcal{B}$ is formed by agents in $\mathcal{S}$. 

\item [$\boldsymbol{2.}$] \textbf{There is a coalition $\boldsymbol{C$ that breaks $\mathcal{B}$ and   $\mathcal{B}$ is protected by $\!\!\mathcalligra{R} \ }$ }.
Let $\pi_{\!\!\mathcalligra{R}} \ $ be a  $\! \! \mathcalligra{R}\ $-- coalition structure. Thus $\pi_{\!\!\mathcalligra{R}} \ (i)=\{i\}$ for each $i\in N(\mathcal{B})$ (since $N(\mathcal{B}) \subseteq \mathcal{S}$). Since $C$ breaks $\mathcal{B}$ and $\mathcal{B}$ is protected by $\! \! \mathcalligra{R}\ $, and the fact that Condition (i) holds, $C\subseteq N(\mathcal{B}).$ Thus, there is $\pi\in \mathcal{A}$ such that  $\pi \gg \pi_{\!\!\mathcalligra{R}} \ $ via $C.$ Therefore, $\mathcal{B}$  belongs either to $\mathcalligra{B} \  \ ^\star$ or to $\mathcal{F}$, contradicting the requirement that $\mathcal{B}$ is formed by agents in $\mathcal{S}$. 

\end{enumerate}
Therefore,  $\mathcalligra{R} \ \ $  is a reduced form.
\end{proof}

\end{document}